\newtheorem{theorem}{Theorem}[section]
\newtheorem{prop}{Proposition}[section]
\newtheorem{coro}{Corollary}[section]
\newtheorem{lemma}[theorem]{Lemma}
\newtheorem{obs}{Observation}[section]
\theoremstyle{definition}
\theoremstyle{remark}
\newtheorem{remark}[theorem]{Remark}
\numberwithin{equation}{section}
\newcommand{\comments}[1]{}
\newcommand{\N}{\mathbb{N}}
\newcommand{\R}{\mathbb{R}}
\newcommand{\C}{\mathbb{C}}
\newcommand{\T}{\mathbb{T}}
\DeclareMathOperator{\im}{Im}
\newcommand{\ud}{\mathrm{d}}
\begin{document}

\title[Large coupling asymptotics of the Lyapunov exponent]{Large coupling asymptotics for the Lyapunov exponent of quasi-periodic Schr\"odinger operators with analytic potentials}

\author{Rui Han, C. A. Marx}

\address{Department of Mathematics, UC Irvine, Irvine, CA - 92697}
\email{rhan2@uci.edu}

\address{Department of Mathematics, Oberlin College, Oberlin, OH - 44074}
\email{cmarx@oberlin.edu}





\maketitle

\begin{abstract}
We quantify the coupling asymptotics for the Lyapunov-exponent of a one-frequency quasi-periodic Schr\"odinger operator with analytic potential sampling function. The result refines the well-known lower bound of the Lyapunov-exponent by Sorets and Spencer.
\end{abstract}

\section{Introduction} \label{sec_intro}
Consider a one-dimensional quasi-periodic Schr\"odinger operator
\begin{equation} \label{eq_schrodop}
(H_{\lambda; x} \psi)_n = \psi_{n-1} + \psi_{n+1} + \lambda f(x + n \alpha) \psi_n ~\mbox{,}
\end{equation}
where $\alpha$ is a fixed irrational (``the {\em{frequency}}''), $\lambda \in \mathbb{R}$ denotes the coupling parameter, and $0 \not \equiv f: \mathbb{T} \to \mathbb{R}$ is analytic with extension to a neighborhood of the horizontal strip, 
\begin{equation*}
\mathbb{T}_h:= \{ x + i y ~\vert ~x \in \mathbb{T} ~,~ \vert y \vert \leq h \} ~\mbox{.} 
\end{equation*}

In this paper we are interested in the large coupling behavior of the {\em{Lyapunov exponent}} (LE),
\begin{eqnarray} \label{eq_le}
L(\alpha, A_{\lambda,E}) &= \lim_{n \to \infty} \frac{1}{n} \int_\mathbb{T} \log \Vert A_{\lambda, E}^{(n)}(x; \alpha) \Vert \ud x ~\mbox{,} \\
A_{\lambda, E}^{(n)}(x; \alpha) &= A_{\lambda, E}(x + (n-1)\alpha) \dots A_{\lambda, E}(x) ~\nonumber ~\mbox{,}
\end{eqnarray}
associated with the {\em{Schr\"odinger cocycle}} $(\alpha, A_{\lambda, E})$, 
\begin{equation} \label{eq_schrodcoc}
A_{\lambda, E} = \begin{pmatrix} E - \lambda f & -1 \\ 1 & 0 \end{pmatrix} ~\mbox{.}
\end{equation}
Schr\"odinger cocycles are special cases of general analytic cocycles $(\alpha, D)$ induced by a given analytic matrix-valued function $D: \mathbb{T} \to M_2(\mathbb{C})$ and a fixed $\alpha \in \mathbb{T}$. The cocycle $(\alpha, D)$ defines a dynamical system on $\mathbb{T} \times \mathbb{C}^2$ given by the linear skew-product $(x,v) \mapsto (x + \alpha, D(x) v)$. The Lyapunov exponent $L(\alpha, D) \in [-\infty, +\infty)$ of the cocycle $(\alpha, D)$ is then defined in analogy to (\ref{eq_le}). We note that since $A_{\lambda,E} \in SL(2,\mathbb{R})$, one necessarily has $L(\alpha, A_{\lambda,E}) \geq 0$.

It has been known since the pioneering works of Herman \cite{Herman_1983} and Sorets-Spencer \cite{SoretsSpencer_1991} that for $\vert \lambda \vert \geq \lambda_0$, the LE is positive as a consequence of the lower bound 
\begin{equation} \label{eq_soretsspencerbound}
L(\alpha, A_{\lambda, E}) \geq \log \vert \lambda \vert - c ~\mbox{.}
\end{equation}
Here, $c$ and $\lambda_0$ are independent of $\alpha$ and only depend on the properties of $f$. In particular, (\ref{eq_soretsspencerbound}) rigorously establishes the heuristics that for sufficiently large $\lambda$, the potential in (\ref{eq_schrodop}) should dominate the discrete Laplacian. We mention that simplified proofs of the Sorets-Spencer bound (\ref{eq_soretsspencerbound}) were obtained in \cite{Bourgain_book_2005}, \cite{Zhang_2012}, and \cite{KleinDuarte_2013_posLE}.

The question of capturing the large coupling behavior of the LE has a rich history which influenced greatly the development of many aspects of the theory of quasi-periodic operators. For a more extensive, recent account of both the history and some of the key techniques emerging from the work on this question we refer the reader to \cite{JitomirskayaMarx_ETDS_2016_review} and \cite{Damanik_review2014}. 

The main result of this paper refines the Sorets-Spencer bound in (\ref{eq_soretsspencerbound}) by establishing a {\em{rate of convergence}} in the coupling. Following, $\mathcal{C}_h^\omega(\mathbb{T}; \mathbb{R})$ denotes the one-periodic, real analytic functions with extension to a neighborhood of $\mathbb{T}_h$ and equipped with the norm, $\Vert f \Vert_h:= \sup_{z \in \mathbb{T}_h} \vert f(z) \vert$. More generally, $\mathcal{C}_h^\omega(\mathbb{T})$ will denote the complex analytic, one-periodic functions with extension to a neighborhood of $\mathbb{T}_h$.

\begin{theorem} \label{thm_main}
Given a quasi-periodic Schr\"odinger operator (\ref{eq_schrodop}) with $f \in \mathcal{C}_h^\omega(\mathbb{T}; \mathbb{R})$ and $\alpha$ irrational. For every $0<\rho < \frac{1}{2}\min{(h,1)}$, there exist $0 < \lambda_0=\lambda_0(f, \rho)$, $N=N(f,\rho) \in \mathbb{N}$ and $0 < C=C(f,\rho)$ such that for all , $|\lambda| > \lambda_0$, and $E \in \mathbb{R}$, one has
\begin{align} \label{eq_asymptoticform}
L(\alpha, A_{\lambda, E})=\ln{|\lambda|}+\int_{\mathbb{T}} \log \vert {E}/{\lambda} - f(x) \vert ~\ud x+O(|\lambda|^{-\frac{2}{2N+1}}),
\end{align}
where $|O(|\lambda|^{-\frac{2}{2N+1}})|\leq C|\lambda|^{-\frac{2}{2N+1}}$. 
\end{theorem}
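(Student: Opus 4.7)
The plan is to factor out the leading growth, complexify the phase, establish a sharp off-axis estimate via a dominated-splitting argument, and transfer back to the real axis using convexity. Write $v(z) := E/\lambda - f(z)$ and factor
$$A_{\lambda, E}(x) = \lambda B_\lambda(x), \qquad B_\lambda(x) := \begin{pmatrix} v(x) & -1/\lambda \\ 1/\lambda & 0 \end{pmatrix},$$
so that $L(\alpha, A_{\lambda,E}) = \log|\lambda| + L(\alpha, B_\lambda)$. The case $|E/\lambda| > 2\Vert f\Vert_h$ is trivial (uniform hyperbolicity of $B_\lambda$), so we may restrict to the compact range $t := E/\lambda \in K := [-2\Vert f\Vert_h, 2\Vert f\Vert_h]$; compactness of $K$ is what will ultimately yield constants uniform in $E$. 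The goal reduces to proving
$$L(\alpha, B_\lambda) = \int_\T \log|v(x)|\,\ud x + O(|\lambda|^{-2/(2N+1)}).$$

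Introduce the complexified Lyapunov exponent $\mathcal{L}(y) := L(\alpha, B_\lambda(\cdot + iy))$ and its ``free'' counterpart $m(y) := \int_\T \log|v(x+iy)|\,\ud x$ on $|y| \leq \rho$. Both are continuous, even, and convex in $y$; moreover, they are uniformly bounded on $[-\rho, \rho]$ (for $m$ directly from zero-counting uniform in $t \in K$, for $\mathcal{L}$ from the already-known Sorets--Spencer bound (\ref{eq_soretsspencerbound}) combined with the trivial upper bound $\mathcal{L}(y) \leq \log \Vert B_\lambda\Vert_\rho$). Consequently, their one-sided derivatives are uniformly bounded on $[0, \rho/2]$ by some $C_1 = C_1(f, \rho)$. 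Analyticity of $f$ plus compactness of $K$ also give a Lojasiewicz-type estimate: there exist $N = N(f, \rho) \in \N$ and $c_0 = c_0(f, \rho) > 0$ with
$$\min_{x \in \T} |v(x + iy)| \geq c_0 y^N, \qquad 0 < y \leq \rho,$$
uniformly in $t \in K$, where $N$ bounds the maximal order of vanishing of $f - t$ in $\T_\rho$.

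The core is the off-axis estimate. For $\epsilon \in (0, \rho/2]$ with $\lambda c_0 \epsilon^N$ large, $B_\lambda(\cdot + i\epsilon)$ has eigenvalues $v + O((\lambda^2 v)^{-1})$ and $O((\lambda^2 v)^{-1})$ with well-separated moduli, so it admits an analytic dominated splitting. Conjugating by the corresponding near-identity analytic frame trivializes the cocycle to upper-triangular form with those diagonal entries; exploiting the specific antisymmetric structure of the off-diagonal perturbation $\begin{pmatrix} 0 & -1/\lambda \\ 1/\lambda & 0 \end{pmatrix}$ yields a quadratic (rather than linear) error in the LE:
$$\mathcal{L}(\epsilon) = m(\epsilon) + O\bigl((\lambda \epsilon^N)^{-2}\bigr).$$
Transfer to $y = 0$ via convexity:
$$|\mathcal{L}(0) - m(0)| \leq |\mathcal{L}(\epsilon) - m(\epsilon)| + \int_0^\epsilon |\mathcal{L}'(y^+) - m'(y^+)|\,\ud y \leq C' (\lambda \epsilon^N)^{-2} + 2 C_1 \epsilon,$$
and balance via $\epsilon \asymp \lambda^{-2/(2N+1)}$ to obtain the sharp error $O(|\lambda|^{-2/(2N+1)})$. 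Continuity of $m$ at $0$ identifies $m(0) = \int_\T \log|v(x)|\,\ud x$, completing the argument.

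The main obstacle is producing the quadratic power $(\lambda\epsilon^N)^{-2}$ in the off-axis estimate: a single diagonalization only yields $(\lambda\epsilon^N)^{-1}$, and the sharp bound requires a second-order analysis exploiting the antisymmetric form of the off-diagonal block (for instance, iterating the near-triangularization once more, or a careful matricial Riccati-type analysis). A related subtlety is ensuring uniformity in $t = E/\lambda \in K$ of the Lojasiewicz constants $c_0, N$ and of the derivative bound $C_1$, which relies crucially on both analyticity and compactness of $K$.
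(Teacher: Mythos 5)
Your architecture is essentially the paper's: complexify the phase, pull out the scalar factor so that the remaining cocycle is a small perturbation of $(\begin{smallmatrix}1&0\\0&0\end{smallmatrix})$, prove a quadratic (dominated-splitting) error off the real axis, and return to $y=0$ by convexity with the balance $\epsilon\asymp|\lambda|^{-2/(2N+1)}$. However, your key ``Lojasiewicz-type estimate'' $\min_{x\in\mathbb{T}}|E/\lambda-f(x+iy)|\geq c_0\,y^N$ for \emph{all} $0<y\leq\rho$, uniformly in $t=E/\lambda$, is false in general: a real-analytic $f-t$ typically has zeros in the strip with nonzero imaginary part, and at a height $y$ equal to such an imaginary part the left-hand side vanishes. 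Worse, as $t$ crosses a critical value of $f$ a pair of real zeros moves off the axis with arbitrarily small imaginary part, so there is no interval $(0,\epsilon_0]$ of heights on which your bound can hold uniformly over the compact range of $t$ you need. This is exactly why the Sorets--Spencer/Duarte--Klein statement (Lemma \ref{DK} in the paper) is an existence statement: for each $t$ and each $\delta$ there is \emph{some} good height $y=y(t)\in[\delta/2,\delta]$, chosen to dodge the imaginary parts of the finitely many zeros, at which $\min_x|f(x+iy)-t|\geq \hat\beta_\rho(f)(K_1\delta/\hat N_{2\rho}(f))^{\hat N_{2\rho}(f)}$; uniformity in $t$ comes from semicontinuity of the zero count and of the zero-free part, not from a pointwise-in-$y$ bound. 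Your argument survives this repair, since you only use one good height of size $\asymp|\lambda|^{-2/(2N+1)}$ and the convexity transfer is indifferent to the height depending on $E/\lambda$; note also that after the repair the natural exponent is the total number of zeros in the strip (the paper's $\hat N_{2\rho}(f)$), not the maximal order of vanishing, because near critical values several simple zeros cluster and each contributes a factor of order $\delta$.

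The second gap is the one you flag yourself: the quadratic off-axis estimate $\mathcal{L}(\epsilon)=m(\epsilon)+O((\lambda\epsilon^N)^{-2})$ is asserted but not proved, and the naive norm estimate indeed only gives $O((\lambda\epsilon^N)^{-1})$, which would ruin the exponent $\tfrac{2}{2N+1}$. This is precisely what the paper imports as Lemma \ref{lem_ds} (Marx--Shou--Wellens): for $D=(\begin{smallmatrix}1&g^{-1}\\ g^{-1}&0\end{smallmatrix})$ with $m(g)>2$ the cocycle is dominated and $|L(\alpha,D)|\leq K_2\,m(g)^{-2}$. The mechanism is not the antisymmetry of the off-diagonal block but the fact that the trace is $1$ while the determinant is $-g^{-2}$, so the expanding rate is $1+O(m(g)^{-2})$; converting this heuristic into a bound on the LE requires controlling the invariant unstable direction uniformly (a Riccati/fixed-point argument), which is the nontrivial content of that lemma and must either be cited or carried out. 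With those two repairs your proof closes; your cruder convexity-based derivative bound is a legitimate substitute for the paper's acceleration/winding-number computation (Proposition \ref{prop_accelvalue}), which the paper needs only for the sharper constant and for Corollary \ref{coro_accelestim}.
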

\begin{remark} \label{remark_quantification}
The proof determines $N=N(f,\rho)$ in Theorem \ref{thm_main} in terms of the {\em{number of zeros}} of the family $\{ f - E/\lambda\}_{E/\lambda \in \mathbb{R}}$ on its domain of analyticity. Specifically, for $g \in \mathcal{C}_h^\omega(\mathbb{T}; \mathbb{R})$ and $0< \epsilon < h$, denote by $N_{[-\epsilon, \epsilon]}(g)$ the number of zeros of $g$ on $\mathbb{T}_\epsilon$ counting multiplicities. Given $f \in \mathcal{C}_h^\omega(\mathbb{T}; \mathbb{R})$ as in Theorem \ref{thm_main}, we let
\begin{equation}\label{defN}
\hat{N}_\epsilon(f):= \max_{\mu \in \mathbb{R}} N_{[-\epsilon,\epsilon]}(f - \mu) ~\mbox{.}
\end{equation}
We then show that $N=N(f,\rho)$ in Theorem \ref{thm_main} can be chosen as
\begin{equation} \label{eq_zerocountingaccel}
N = \hat{N}_{2\rho}(f) ~\mbox{,}
\end{equation}
in which case, 
\begin{equation} \label{eq_lambda0}
\lambda_0 = 2^{2N + 1} ~\mbox{,}
\end{equation}
and $C$ is as quantified in (\ref{errorestimate}).
\end{remark}

This article was motivated by a recent work by Duarte-Klein \cite{DuarteKlein_2015_contising} and an announced result by Ge-You \cite{GeYou}. 
In \cite{DuarteKlein_2015_contising} an asymptotic formula similar to (\ref{eq_asymptoticform}) was obtained however with a slower, weak-H\"older type rate of convergence $O(\mathrm{e}^{-c (\log \vert \lambda \vert)^b})$, for constants $0 < c$ and $0 < b < 1$ (see Proposition 1.1 in \cite{DuarteKlein_2015_contising}). 
Moreover, the strategy of their proof requires the authors to impose a {\em{Diophantine}} condition on $\alpha$, which lead to a dependence of $\lambda_0$ on the Diophantine properties of the frequency. 
In \cite{GeYou}, a sharper rate of convergence $O(|\lambda|^{-1})$ was obtained, however only for {\em trigonometric polynomial} $f$.
To our knowledge, the authors reduced estimating the Lyapunov exponent of (\ref{eq_schrodop}) to that of its dual Hamiltonian. 
$f$ being trigonometric polynomial ensures the dual Hamiltonian to be finite-range, thus almost reducibility methods are applicable.  

Theorem \ref{thm_main} strengthens the result of Duarte-Klein by improving the rate of convergence from weak-H\"older to H\"older, as well as extending the validity of the asymptotic formula to cover {\em{all}} irrational frequencies, irrespective of the Diophantine properties.

We do however point out that the result in \cite{DuarteKlein_2015_contising} was proven in the more general setting of {\em{multi-frequency}} Schr\"odinger cocycles, where $f$ in (\ref{eq_schrodcoc}) is analytic on $\mathbb{T}^d$ and the base dynamics is given by translations by a fixed Diophantine frequency-vector $\alpha \in \mathbb{T}^d$. Since our proof relies on the property that the zeros of a non-identically vanishing, analytic function in one variable are isolated, our work does {\em{not}} extend to the multi-frequency case.

\subsection{Outline of the proof of Theorem \ref{thm_main}} \label{sec_outline}

The key idea underlying much of the work on positivity of the LE is to argue that for sufficiently large coupling the upper left entry in (\ref{eq_schrodcoc}) determines the dynamics of $(\alpha, A_{\lambda, E})$. In \cite{DuarteKlein_2015_contising} this is achieved by factoring the cocycle (\ref{eq_schrodcoc}) according to
\begin{equation} \label{eq_factoriz_dk}
A_{\lambda, E} = (\lambda) \begin{pmatrix} E/\lambda - f  & o(1) \\ o(1) & 0 \end{pmatrix} ~\mbox{, for $\lambda \to \infty$.}
\end{equation}
Morally this reduces the problem to the ``limiting cocycle'' induced by $D_\infty = (\begin{smallmatrix} E/\lambda - f & 0 \\ 0 & 0 \end{smallmatrix})$ whose LE is simply given by $\int_\mathbb{T} \log \vert f(x) - E/\lambda \vert ~\ud x$, thereby already accounting for the first two terms in (\ref{eq_asymptoticform}). Of course this reduction to the ``limiting cocycle'' requires establishing a continuity theorem which quantifies the modulus of continuity of the LE locally about $(\alpha, D_\infty)$. The primary technical obstacle in this context is however provided by the everywhere non-invertibility (or {\em{singularity}}) of the cocycle $(\alpha, D_\infty)$ as $\det D_\infty(x) \equiv 0$.

Duarte-Klein overcome this technical problem by proving that for {\em{Diophantine}} $\alpha \in \mathbb{T}^d$, the LE is {\em{weakly H\"older continuous}} at $(\alpha, D_\infty)$, i.e. characterized by the modulus of continuity $\omega(h) = C \mathrm{e}^{-c (\log \vert \lambda \vert)^b}$ for constants $C,c>0$ and $0 < b < 1$ ($b = 1$ would correspond to H\"older continuity) which depend on the Diophantine properties of $\alpha$. The latter then yields (\ref{eq_asymptoticform}) with however the H\"older-rate replaced by a weak-H\"older rate. We mention that Duarte-Klein obtain this continuity result as a special case of a general quantitative continuity theorem for the LE of singular analytic cocycles which, in fact, constitutes the main result of \cite{DuarteKlein_2015_contising}; see also \cite{DuarteKlein_monograph}.

Instead of employing the factorization in (\ref{eq_factoriz_dk}), we follow the approach already present in the original work of Sorets-Spencer, which takes advantage of the simple structure of the zero set of analytic functions in {\em{one}} variable. 

Extending the cocycle to the complex plane, $A_{E, \lambda; y}(x) :=A_{E, \lambda}(x + i y)$, isolatedness of zeros implies that $(E/\lambda - f(\cdot + i y))$ is bounded away from zero for $0 < \vert y \vert < h$, which in turn allows to factorize according to
\begin{equation} \label{eq_factoriz_trad}
A_{\lambda, E; y}(x) = \lambda (E/\lambda - f(x + i y)) \begin{pmatrix} 1 & o(1) \\ o(1) & 0 \end{pmatrix} ~\mbox{, for $\lambda \to \infty$.}
\end{equation}
Even though the limiting co-cycle $D_\infty^\prime = (\begin{smallmatrix} 1 & 0 \\ 0 & 0 \end{smallmatrix})$ is still singular, the latter induces extremely regular dynamics known as a {\em{dominated splitting}} (for a definition, see Sec. \ref{sec_keylemmas}). Locally about dominated splittings the modulus of continuity of the LE is significantly stronger than weakly H\"older, which we quantify explicitly in Lemma \ref{lem_ds}. The latter already yields a version of our asymptotic formula (\ref{eq_asymptoticform}) {\em{off}} the real axis, which, using that $L(\alpha, A_{\lambda, E;y})$ is known to be {\em{convex}} in $y$, will be shown to carry over to $y = 0$.

\subsection{Consequences for the acceleration} \label{sec_intro_accel}
Finally, since our proof of (\ref{eq_asymptoticform}) determines the LE of the {\em{phase-complexified}} Schr\"odinger cocycle $L(\alpha, A_{\lambda, E;y})$ for $y > 0$, it also allows us to extract estimates for its right-derivative,
\begin{equation} \label{eq_acceldefn}
\omega(\alpha, \lambda, E; y):= \frac{1}{2 \pi} \lim_{t \to 0^+} \dfrac{L(\alpha, A_{\lambda, E;y + t}) - L(\alpha, A_{\lambda, E;y})}{t} \in \mathbb{Z} ~\mbox{.}
\end{equation}
The right-derivative (\ref{eq_acceldefn}), which exists by convexity of $y \mapsto L(\alpha, A_{\lambda, E;y})$, is known as the {\em{acceleration}} and was first introduced by Artur Avila in his global theory of analytic one-frequency Schr\"odinger operators \cite{Avila_globalthy_published}. In \cite{Avila_globalthy_published}, Avila shows that $\omega(\alpha, \lambda, E; y) \in \mathbb{Z}$ (``{\em{quantization of the acceleration}}''), which in particular allows for a stratification of the spectrum according to the values of the acceleration at $y = 0$. Avila then relates the values of the acceleration at $y = 0$ to the spectral properties of the operator (\ref{eq_schrodop}). The acceleration at $y = 0$ thus forms the key to Avila's dynamical formulation of the spectral theory of quasi-periodic Schr\"odinger operators; consequently, various conjectures about its behavior have been made, see e.g. Sec. 2.1.4 in \cite{Avila_globalthy_published}.

As a corollary to the proof of Theorem \ref{thm_main}, we extract the following upper bound of the acceleration at $y = 0$ in terms of the zero-counting function $N=N(f,\rho)$ defined in (\ref{eq_zerocountingaccel}):
\begin{coro} \label{coro_accelestim}
Let $f \in \mathcal{C}_h^\omega(\mathbb{T}; \mathbb{R})$, $0<\rho<\frac{1}{2}\min{(h,1)}$, as in Theorem \ref{thm_main}, and $N=N(f,\rho)$ and $\lambda_0= \lambda_0(f, \rho)$ as given in (\ref{eq_zerocountingaccel})-(\ref{eq_lambda0}). Then, for all $\vert \lambda \vert \geq \lambda_0$ and $E \in \mathbb{R}$, the acceleration at $y = 0$ is bounded above by
\begin{equation} \label{eq_estimaccel}
\omega(\alpha, \lambda, E; 0) \leq \sup_{y \in [0, \rho]} \frac{1}{2 \pi} \dfrac{\partial}{\partial y} \int_\mathbb{T} \log \vert E/\lambda - f(x + i y) \vert ~\ud x \leq \frac{1}{2} N ~\mbox{.}
\end{equation}
\end{coro}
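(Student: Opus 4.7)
The plan has two independent steps matching the two inequalities of \eqref{eq_estimaccel}. Abbreviate $L(y) := L(\alpha, A_{\lambda, E; y})$, $g(z) := E/\lambda - f(z)$, and $\phi(y) := \int_{\mathbb{T}} \log|g(x+iy)|\,\ud x$; both $L$ and $\phi$ are convex in $y$, and $\phi$ is even since Schwarz reflection gives $\overline{g(z)} = g(\bar z)$ ($E \in \mathbb{R}$, $f$ real on $\mathbb{T}$).

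For the right-hand inequality I would use a Jensen-type count in $\mathbb{T}_\rho$. The subharmonic function $\log|g|$ has Riesz measure $2\pi \sum_j \delta_{z_j}$ at the zeros $z_j = x_j + iy_j$ of $g$, and integrating out $x$ by Fubini---the $\partial_x^2$-piece of the Laplacian drops out by periodicity---yields the distributional identity $\phi''_{\mathrm{dist}} = 2\pi \sum_j \delta_{y_j}$. Right-continuity of $\phi'_+$ gives $\phi'_+(\rho) - \phi'_+(-\rho) = 2\pi\,\#\{j : y_j \in (-\rho, \rho]\}$; combining this with evenness ($\phi'_+(-\rho) = -\phi'_-(\rho)$), the jump relation $\phi'_+(\rho) - \phi'_-(\rho) = 2\pi\,\#\{j : y_j = \rho\}$, and the conjugate-symmetry identity $\#\{y_j = \rho\} = \#\{y_j = -\rho\}$, a short algebraic manipulation yields
\[
\phi'_+(\rho) = \pi\, N_{[-\rho,\rho]}(g) \leq \pi\, \hat{N}_{2\rho}(f) = \pi N.
\]
Monotonicity of $\phi'_+$ identifies $\sup_{y\in[0,\rho]}\phi'_+(y)$ with $\phi'_+(\rho)$, delivering simultaneously $\sup\phi'_+/(2\pi) \leq N/2$ and $\sup\phi'_+/(2\pi) \in \frac{1}{2}\mathbb{Z}$.

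For the left-hand inequality, convexity of $L$ in $y$ gives $L'_+(0)\cdot\rho \leq L(\rho) - L(0)$, and convexity of $\phi$ gives $\phi(\rho) - \phi(0) \leq \rho\,\sup_{[0,\rho]}\phi'_+$. The proof of Theorem \ref{thm_main} in fact establishes the asymptotic $L(y) = \log|\lambda| + \phi(y) + O(|\lambda|^{-2/(2N+1)})$ first off the real axis---this is the regime where the factorization \eqref{eq_factoriz_trad} is non-degenerate and the dominated splitting quantified by Lemma \ref{lem_ds} applies---and then propagates to $y = 0$ by convexity. Evaluating at $y=0$ and $y=\rho$ and combining with the two convexity inequalities produces
\[
\omega(\alpha, \lambda, E; 0) \;\leq\; \frac{1}{2\pi}\sup_{y \in [0,\rho]} \phi'_+(y) \;+\; \frac{C'}{2\pi\rho}\,|\lambda|^{-2/(2N+1)}.
\]

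The final step is to invoke Avila's quantization $\omega(\alpha, \lambda, E; 0) \in \mathbb{Z}$ together with the half-integer-valuedness $\sup\phi'_+/(2\pi) \in \frac{1}{2}\mathbb{Z}$ from the earlier step: the difference $\omega(0) - \sup\phi'_+(\cdot)/(2\pi)$ lies in $\frac{1}{2}\mathbb{Z}$, so once the error is strictly below $1/2$ the inequality collapses to $\omega(0) \leq \sup\phi'_+/(2\pi)$. I expect the main obstacle to be precisely this discretization step: verifying that $C'/(2\pi\rho)$, inherited from the explicit error constant in the proof of Theorem \ref{thm_main}, is indeed $< 1/2$ already for $|\lambda| \geq \lambda_0 = 2^{2N+1}$---possibly at the cost of enlarging $\lambda_0$ by an explicit multiplicative factor depending on $f$ and $\rho$.
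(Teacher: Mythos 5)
Your second inequality is fine: the Riesz-measure/Jensen computation showing $\phi'_+(y)/(2\pi)\in\tfrac12\mathbb{Z}$ and $\phi'_+(\rho)\le \pi\hat N_{2\rho}(f)$ is essentially a re-derivation of Proposition \ref{prop_accelvalue} (equation \eqref{eq_prop_accelvalue_real}), which is exactly how the paper obtains the bound $\le\frac12 N$. The problem is the first inequality, and there the gap is genuine.

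Your argument needs the asymptotic $L(\alpha,A_{\lambda,E;y})=\log|\lambda|+\phi(y)+O(|\lambda|^{-2/(2N+1)})$ at the \emph{fixed} height $y=\rho$, uniformly in $E$. The paper does not establish this, and the factorization \eqref{eq_factoriz_trad} cannot deliver it: Lemma \ref{DK} only produces, for each value of $E/\lambda$, \emph{some} good height $y\in[\delta/2,\delta]$ with $\delta\asymp|\lambda|^{-2/(2N+1)}$, i.e.\ a height tending to $0$ and depending on $E/\lambda$. At $y=\rho$ the function $E/\lambda-f(\cdot+i\rho)$ may vanish for suitable $E$, so the cocycle $D_{\lambda,E;\rho}$ is not uniformly invertible, Lemma \ref{lem_ds} does not apply, and the only available continuity there is of weak-H\"older type (Duarte--Klein, which moreover requires Diophantine $\alpha$). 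There is also a circularity concern with invoking the formula at $y=0$: in this paper Theorem \ref{thm_main} is itself proved \emph{via} the acceleration bound \eqref{ac1}, i.e.\ via the very statement you are proving. Finally, even granting both asymptotics, your discretization step needs $C'/(2\pi\rho)\,|\lambda|^{-2/(2N+1)}<1/2$; since $C$ in \eqref{errorestimate} contains the factor $\hat\beta_\rho(f)^{-1/N}$, which can be arbitrarily large, this fails in general at the stated threshold $\lambda_0=2^{2N+1}$, so at best you recover the corollary for an enlarged $\lambda_0$, not as stated.

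The paper's route avoids error terms entirely and is the natural fix for your first inequality. At the good height $y\in[y_1,y_2]\subset[\delta/2,\delta]$ provided by Lemma \ref{DK}, the factor cocycle $(\alpha,D_{\lambda,E;y})$ in \eqref{eq_LEOFFreal} induces a dominated splitting; smoothness of the LE in that regime plus quantization forces its acceleration to be \emph{identically zero} (equal to that of the constant limit $\bigl(\begin{smallmatrix}1&0\\0&0\end{smallmatrix}\bigr)$). Hence, from \eqref{eq_fact_le}, the acceleration of the Schr\"odinger cocycle at height $y$ equals $\omega[E/\lambda-f](y)$ \emph{exactly}, with no remainder. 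Since $y\mapsto L(\alpha,A_{\lambda,E;y})$ is even and convex, its right-derivative is nonnegative and nondecreasing on $[0,\rho]$, so $0\le\omega(\alpha,\lambda,E;0)\le\omega(\alpha,\lambda,E;y)=\omega[E/\lambda-f](y)$, and Proposition \ref{prop_accelvalue} closes the argument. In short: compare accelerations at the good height rather than values of $L$ at $0$ and $\rho$; this removes both the unjustified $y=\rho$ asymptotic and the error-versus-quantization threshold.
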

\begin{remark}
Introducing a stratification of the energy axis described in Sec. \ref{sec_improvements}, we are even able to improve on the estimate given in (\ref{eq_estimaccel}): For each of these strata, we will argue that the acceleration at $y=0$ can simply be bounded above by the number of intersection points of the graph of $f$ with horizontal lines, see Theorem \ref{prop_strengthned} for details.
\end{remark}

We structure the remainder of the paper as follows. Sec. \ref{sec_keylemmas} contains the two key ingredients that were outlined in Sec. \ref{sec_intro_accel} above: Lemma \ref{DK} gives a quantitative lower bound of $\vert E/\lambda - f(x + i y) \vert$, uniformly both in $x \in \mathbb{T}$ and $E/\lambda \in \mathbb{R}$, for some positive imaginary part $0< y$ which in general depends on the value of the parameter $E/\lambda$. Lemma \ref{lem_ds} quantifies the modulus of continuity of the LE  locally about the limiting cocycle $(\alpha, D_\infty)$. Sec. \ref{app_keyfunctional} discusses some technical properties that underlie the estimate of the acceleration given in Corollary  \ref{coro_accelestim}. 

Finally, the proof of the main results, Theorem \ref{thm_main} and Corollary \ref{coro_accelestim}, is presented in Sec. \ref{sec_proofthmmain}. To conclude, Sec. \ref{sec_improvements} discusses the above-mentioned improvements of the upper bound on the acceleration through a stratification of the energy-axis (Theorem \ref{prop_strengthned}).

\section{Key lemmas} \label{sec_keylemmas}

Our proof of Theorem \ref{thm_main} relies on two key lemmas. The first of the two implies that the factorization in (\ref{eq_factoriz_trad}) can be done within the analytic category. It is an immediate consequence of the simple structure of the zero set of analytic functions of a single variable. 

A non-quantitative form already played a crucial role in the original work of Sorets-Spencer (see also Ch. 3 in \cite{Bourgain_book_2005}):
\begin{obs} \label{obs_key}
Let $0 \not \equiv f \in \mathcal{C}_h^\omega(\mathbb{T})$. For every $0 < \delta < h$, there exists $\eta > 0$ such that
\begin{equation*}
\min_{\mu \in \mathbb{R}} \max_{\delta/2 \leq y \leq \delta} \min_{x \in \mathbb{T}} \vert f(x + i y) - \mu \vert > \eta ~\mbox{,}
\end{equation*}
\end{obs}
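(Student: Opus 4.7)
My plan is to define
\[
F(\mu) := \max_{\delta/2 \leq y \leq \delta} \min_{x \in \mathbb{T}} |f(x + iy) - \mu|,
\]
and prove $\inf_{\mu \in \mathbb{R}} F(\mu) > 0$ in two stages: first reduce to $\mu$ in a compact interval via a crude triangle-inequality bound, then on that compact range combine pointwise positivity of $F$ with Lipschitz continuity of $F$ in $\mu$.

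The reduction is routine: if $|\mu| \geq 2 \|f\|_h$, then $|f(x+iy) - \mu| \geq |\mu| - \|f\|_h \geq \|f\|_h$ uniformly in $x,y$, so $F(\mu) \geq \|f\|_h$ on this range. Thus it suffices to bound $F$ below on the compact interval $I := [-2\|f\|_h, 2\|f\|_h]$. For pointwise positivity at a fixed $\mu \in I$, the heart of the matter is the one-variable structure of analytic zero sets. Assuming $f - \mu \not\equiv 0$ (guaranteed whenever $f$ is non-constant, the only case of interest for the application), the zero set $Z_\mu := \{z \in \mathbb{T}_h : f(z) = \mu\}$ consists of isolated points by the identity theorem for holomorphic functions of one variable, so $Z_\mu \cap \overline{\mathbb{T}_\delta}$ is finite. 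The projection of this finite set onto the imaginary axis is a finite subset of $[-\delta, \delta]$, hence there exists $y_0 \in [\delta/2, \delta]$ avoiding the projection. Then $f(\cdot + iy_0) - \mu$ is zero-free on the compact set $\mathbb{T}$, and continuity together with compactness give $\min_x |f(x + iy_0) - \mu| > 0$, so $F(\mu) > 0$.

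To upgrade pointwise positivity to a uniform lower bound on $I$, I would observe that $F$ is $1$-Lipschitz in $\mu$: the reverse triangle inequality gives $\bigl| \, |f(x+iy) - \mu| - |f(x+iy) - \mu'| \, \bigr| \leq |\mu - \mu'|$ pointwise in $(x,y)$, and this estimate survives both the inner $\min_x$ and the outer $\max_y$. Continuity and strict positivity of $F$ on the compact interval $I$ then yield a positive minimum $\eta_0 > 0$, and taking $\eta := \min(\eta_0, \|f\|_h)$ finishes the proof.

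The main obstacle, conceptually, is the pointwise positivity step: one must rule out a $\mu$ for which the zero set of $f - \mu$ meets every horizontal slice $\{\operatorname{Im} z = y\}$ with $y \in [\delta/2, \delta]$. This is precisely where the one-dimensionality of $\mathbb{T}$ is indispensable, since in higher dimensions analytic zero sets are positive-dimensional and their projections onto a coordinate axis can easily fill an interval, consistent with the authors' later remark that their method does not extend to the multi-frequency setting.
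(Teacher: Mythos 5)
Your proposal is correct, and its essential ingredient is the same one the paper invokes: isolatedness of the zeros of the non-identically-vanishing analytic function $f-\mu$ on the compact sub-strip, which guarantees a zero-free horizontal slice $y_0\in[\delta/2,\delta]$ for each fixed $\mu$. The difference is in execution. The paper argues by contradiction in one line, passing from the failure of a uniform $\eta$ directly to a single $\mu_0$ with $\max_{\delta/2\le y\le\delta}\min_{x\in\mathbb{T}}|f(x+iy)-\mu_0|=0$; that passage tacitly uses exactly what you make explicit, namely that the relevant $\mu$ can be confined to a compact interval (your bound $F(\mu)\ge\Vert f\Vert_h$ for $|\mu|\ge 2\Vert f\Vert_h$) and that $F$ depends continuously on $\mu$ (your $1$-Lipschitz estimate, which indeed survives the inner $\min_x$ and the outer $\max_y$). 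So your direct argument is a fleshed-out version of the paper's compactness step, and its more structured form is closer in spirit to the quantitative refinement the paper actually uses later, Lemma \ref{DK}. You are also right to flag the constancy caveat: as literally stated the observation fails for a nonzero constant $f$ (take $\mu$ equal to that constant), so the hypothesis really needed is that $f-\mu\not\equiv 0$, i.e.\ $f$ non-constant, which is the only case relevant to the application; the paper's phrase ``non-identically vanishing'' is implicitly applied to $f-\mu_0$. One cosmetic point: with your final choice $\eta:=\min(\eta_0,\Vert f\Vert_h)$ you only obtain the inequality $\ge\eta$, so take, say, half of that value to get the strict inequality claimed in the statement.
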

\begin{proof}
Assuming that such $\eta > 0$ did not exist, would imply that \break $\max_{\delta/2 \leq y \leq \delta} \min_{x \in \mathbb{T}} \vert f(x + i y) - \mu_0 \vert = 0$, for some $\mu_0 \in \mathbb{R}$. The latter however contradicts isolatedness of zeros for non-identically vanishing analytic functions.
\end{proof}
A quantitative version of Observation \ref{obs_key} was given in \cite{KleinDuarte_2013_posLE}, see Corollary 4.6 therein. To formulate it, for an interval $I\subseteq [-h,h]$, let $N_I(g)$ be the number of zeros of $g$ on $\{x+iy\ |\ x\in \T, y\in I\}$, counting multiplicities. 
For $0<\epsilon<h$, we denote $N_{\epsilon}(g):=N_{[-\epsilon, \epsilon]}(g)$ and $\tilde{N}_{\epsilon}(g):=N_{(-\epsilon, \epsilon)}(g)$ for simplicity. 
Let $\hat{N}_\epsilon(f)$ be as in (\ref{defN}) and define
\begin{align*}
\hat{\beta}_\epsilon(f):&= \min_{\mu\in \mathbb{R}} \beta_\epsilon(f - \mu) ~\mbox{,} \\
\beta_\epsilon(f - \mu):&= \min_{z \in \mathbb{T}_{\epsilon}} \vert g_{2\epsilon}(f - \mu)(z) \vert ~\mbox{,}
\end{align*}
where $g_{2\epsilon}(f-\mu)(z)$ is the {\em{zero-free part}} of $(2e^{2\pi}+2)^{\tilde{N}_{2\epsilon}(f-\mu)}
(f(z) - \mu)$ on $\mathrm{int}(\mathbb{T}_{2\epsilon})$.
It is shown in \cite{KleinDuarte_2013_posLE} that $N_{\epsilon}(f-\mu)$ is upper semi-continuous and $\beta_{\epsilon}(f-\mu)$ is lower semi-continuous in $\mu$, in particular $\hat{N}_{\epsilon}(f)<\infty$ and $\hat{\beta}_{\epsilon}(f)>0$. 
The quantitative version of Observation \ref{obs_key} forms our first key lemma: 
\begin{lemma}[Duarte-Klein \cite{KleinDuarte_2013_posLE}]\label{DK}
For any $0<\delta<\rho<\frac{1}{2}\min{(h,1)}$, one has
\begin{align}
\min_{\mu\in \R}\max_{\frac{\delta}{2}\leq y\leq \delta} \min_{x\in \mathbb{T}} |f(x+iy)-\mu|\geq \hat{\beta}_{\rho}(f)\left(\frac{K_1 \delta}{\hat{N}_{2\rho}(f)}\right)^{\hat{N}_{2\rho}(f)},
\end{align}
where $K_1$ is an absolute constant quantified in (\ref{eq_constantk1}).
\end{lemma}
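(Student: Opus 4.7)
The plan is to fix $\mu \in \mathbb{R}$ arbitrary and prove the bound with $\hat{\beta}_\rho(f)$ and $\hat{N}_{2\rho}(f) =: N$ replaced by the pointwise quantities $\beta_\rho(f-\mu)$ and $\tilde{N}_{2\rho}(f-\mu) \leq N$; taking $\min_\mu$ then yields the statement. The definition of $g_{2\rho}(f-\mu)$ already provides a Weierstrass-type factorization on $\mathbb{T}_{2\rho}$,
\[
(2e^{2\pi}+2)^{N}\bigl(f(z)-\mu\bigr) \;=\; B(z)\cdot g_{2\rho}(f-\mu)(z),
\]
where $B$ is a product of elementary factors encoding the (at most $N$) zeros $\{z_j\}$ of $f-\mu$ in $\mathrm{int}(\mathbb{T}_{2\rho})$, and by definition $|g_{2\rho}(f-\mu)| \geq \beta_\rho(f-\mu) \geq \hat{\beta}_\rho(f)$ on $\mathbb{T}_\rho$. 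It therefore suffices to produce $y_0 \in [\delta/2,\delta]$ such that $|B(x+iy_0)| \geq (K_1\delta/N)^N$ uniformly in $x \in \mathbb{T}$.

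The choice of $y_0$ is by pigeonhole on the imaginary parts of the zeros: the $N$ values $\{\mathrm{Im}(z_j)\}$ partition $[\delta/2,\delta]$ into at most $N+1$ subintervals, one of which must have length $\geq \delta/(2(N+1))$. Taking $y_0$ to be its midpoint guarantees $|y_0 - \mathrm{Im}(z_j)| \gtrsim \delta/N$ for all $j$, and hence the trivial bound $|(x+iy_0)-z_j| \geq |y_0-\mathrm{Im}(z_j)| \gtrsim \delta/N$ holds for every $x \in \mathbb{T}$ and every $j$.

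Each elementary factor in $B$ is a normalized version of $e^{2\pi i(z-z_j)}-1$ (or its reciprocal, for a zero in the lower half-strip), calibrated so that its sup-norm on $\mathbb{T}_{2\rho}$ is bounded by $2e^{2\pi}+2$; this is exactly what the normalization constant in the definition of $g_{2\rho}$ compensates for. Since $|e^{2\pi i w}-1|$ is comparable to $|w|$ on any bounded region, the uniform geometric separation from the previous paragraph transfers to a uniform lower bound of the form $K_1\delta/N$ on each factor, and the product of $N$ such bounds gives $|B(x+iy_0)| \geq (K_1\delta/N)^N$. Combining with the lower bound on $g_{2\rho}(f-\mu)$ and absorbing the harmless prefactor $(2e^{2\pi}+2)^{-N}$ into a redefined $K_1$ yields the lemma.

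The main technical obstacle is the last step: verifying that $K_1$ can be chosen as an \emph{absolute} constant, independent of $\mu$, $N$, $\rho$ and the location of each $z_j$ inside $\mathbb{T}_{2\rho}$. This amounts to a uniform comparison of $|e^{2\pi i w}-1|$ with $|w|$ on an annular region of the form $\{w : \delta/N \lesssim |w| \lesssim 1\}$, together with the observation that the upper bound $2e^{2\pi}+2$ on each normalized factor is a geometric property of the strip $\mathbb{T}_{2\rho}$ and not of the individual $z_j$; both ingredients form the core of the construction of $g_{2\rho}$ in \cite{KleinDuarte_2013_posLE}.
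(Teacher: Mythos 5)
Your proposal is correct and essentially reproduces the paper's proof in Appendix A: factor out the zero-free part $g_{2\rho}(f-\mu)$, pick $y_0\in[\delta/2,\delta]$ by a pigeonhole argument on the imaginary parts of the zeros so that $|y_0-\mathrm{Im}\,z_{\mu,j}|\gtrsim \delta/\hat{N}_{2\rho}(f)$ for all $j$, and bound each normalized elementary factor from below, which produces the absolute constant $K_1=e^{-2\pi}/(2e^{2\pi}+2)$. The one step to tighten is your comparison claim: $|e^{2\pi i w}-1|$ is \emph{not} comparable to $|w|$ on bounded regions (it vanishes at nonzero integers, and the real part of $w$ is uncontrolled since $x$ ranges over all of $\mathbb{T}$), but the estimate you actually need follows by comparing moduli, $|e^{2\pi i z}-e^{2\pi i z_{\mu,j}}|\geq |e^{-2\pi y_0}-e^{-2\pi \mathrm{Im}\,z_{\mu,j}}|\geq 2\pi e^{-2\pi}\,|y_0-\mathrm{Im}\,z_{\mu,j}|$ (using $|y_0|,|\mathrm{Im}\,z_{\mu,j}|<1$), which only uses the imaginary separation and is exactly the paper's implicit step.
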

To keep the paper self-contained, we include a short proof of Lemma \ref{DK} in Appendix \ref{app_DuarteKlein}.

The second key lemma guarantees that the factorization in (\ref{eq_factoriz_trad}) generates very regular dynamics in the sense of a dominated splitting, a notion originating from the theory of partially hyperbolic dynamical systems: Given an analytic cocycle $(\alpha, D)$, it is said to induce a {\em{dominated splitting}} if there exists a continuous, nontrivial splitting $\C^2=E_x^{(1)}\oplus E_x^{(2)}$ and $N\in\N$ satisfying
\begin{enumerate}
\item[(i)] $D^{(N)}(x; \alpha) E_x^{(j)}\subseteq E_{x+N\alpha}^{(j)}$, for $1 \leq j \leq 2$,
\item[(ii)] for all $v_j\in E_x^{(j)}\setminus\{0\}$, $1 \leq j \leq 2$, one has
\begin{equation*}
\frac{\Vert D^{(N)}(x; \alpha) v_1\Vert}{\Vert v_1\Vert}>\frac{\Vert D^{(N)}(x; \alpha)v_2\Vert}{\Vert v_2\Vert} ~\mbox{.} 
\end{equation*}
\end{enumerate}
Here, as before, we let $D^{(N)}(x ; \alpha) = \prod_{j = N-1}^{0} D(x + j \alpha)$.

From a general point of view, a theorem by Ruelle \cite{Ruelle_1979_LEanalytic} already guarantees that dominated splitting is an open property in the cocycle and that the LE is locally {\em{smooth}} about dominated splittings. In view of Theorem \ref{thm_main}, we however need a more {\em{quantitative}} version of this result, which we proved in \cite{Marx_Shou_Wellens_2015}, see Proposition 3.1, therein:
\begin{lemma}[Marx-Shou-Wellens \cite{Marx_Shou_Wellens_2015}] \label{lem_ds}
Given $D= (\begin{smallmatrix} 1 & g^{-1} \\ g^{-1} & 0 \end{smallmatrix}) \in \mathcal{C}(\mathbb{T}; M_2(\mathbb{C}))$ and $\alpha \in \mathbb{T}$. If 
\begin{equation*}
m(g):= \inf_{x \in \mathbb{T}} \vert g(x) \vert > 2 ~\mbox{,}
\end{equation*}
then the cocycle $(\alpha, D)$ induces a dominated splitting and its Lyapunov exponent satisfies
\begin{eqnarray} \label{eq_complexLE_ds_bd_1}
~\quad & ~~ \log \left\{ \dfrac{ (1 - \frac{\sigma(g)}{m(g)})^2 + \frac{1}{m(g)^2} }{1 + \sigma(g)^2}  \right\} \leq 2 L(\alpha, D) \leq \log \left\{ \left(1 + \frac{\sigma(g)}{m(g)}\right)^2 + \frac{1}{m(g)^2} \right\}     ~\mbox{, } \nonumber \\ 
& \sigma(g) = \min\left\{ 1 ~;~ \dfrac{m(g)-1}{m(g)(m(g)-2)} \right\} \label{eq_estimatesolutionFP} ~\mbox{.}
\end{eqnarray}
\end{lemma}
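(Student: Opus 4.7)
The strategy is to construct an invariant continuous line bundle for $(\alpha, D)$ via a fixed-point argument in projective coordinates, verify it yields a dominated splitting, and read off the LE bounds by direct computation along this invariant direction. Parameterizing non-vertical directions by their slope $s = v_2/v_1$, the cocycle $D$ acts fiber-wise as $s \mapsto 1/(g(x) + s)$. An invariant continuous section $s^{\ast}: \T \to \C$ satisfies $s^{\ast}(x + \alpha) = 1/(g(x) + s^{\ast}(x))$, which I would produce as the fixed point of the operator $(Ts)(x) := 1/(g(x - \alpha) + s(x - \alpha))$ on the closed ball $B_{\sigma} := \{s \in C(\T; \C) : \norm{s}_{\infty} \leq \sigma(g)\}$ equipped with the sup norm.

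The next step is to verify that $\sigma = \sigma(g)$ is precisely tuned so that $T$ self-maps $B_{\sigma}$ and is a contraction on it. For $s \in B_{\sigma}$ one has $\abs{g(x) + s(x)} \geq m - \sigma$, so $\abs{Ts(x)} \leq 1/(m - \sigma)$, giving the self-map property provided $m\sigma \geq 1 + \sigma^{2}$, while $\abs{Ts_{1}(x) - Ts_{2}(x)} \leq \norm{s_{1} - s_{2}}_{\infty}/(m - \sigma)^{2}$ gives contractivity provided $m - \sigma > 1$. A short case analysis between $\sigma = 1$ (active when $2 < m \leq (3 + \sqrt{5})/2$) and $\sigma = (m-1)/(m(m-2))$ (active for larger $m$) confirms both conditions under the assumption $m(g) > 2$, and Banach's fixed-point theorem yields a unique $s^{\ast} \in B_{\sigma}$. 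The resulting invariant bundle $E^{(1)}_{x} := \C \cdot (1, s^{\ast}(x))^{T}$ satisfies the exact identity $D(x)(1, s^{\ast}(x))^{T} = (1 + g(x)^{-1} s^{\ast}(x)) \cdot (1, s^{\ast}(x + \alpha))^{T}$, so telescoping along the base rotation identifies the top LE as $L(\alpha, D) = \int_{\T} \log \abs{1 + g(x)^{-1} s^{\ast}(x)} \, \ud x$. A transverse invariant bundle $E^{(2)}$ is produced analogously via the conjugate dynamics $t \mapsto g(x) + 1/t$ associated with the repelling fixed point of the graph transform, and the $\sim 1$ versus $\sim m^{-2}$ rate gap between the two bundles then furnishes the dominated splitting.

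The upper bound of the claimed inequality falls out of the direct computation $\norm{D(x)(1, s^{\ast}(x))^{T}}^{2} = \abs{1 + g^{-1} s^{\ast}}^{2} + \abs{g}^{-2} \leq (1 + \sigma/m)^{2} + 1/m^{2}$ together with $\norm{(1, s^{\ast}(x))^{T}}^{2} \geq 1$; iterating, taking logarithms, and normalizing yields $2L(\alpha, D) \leq \log\{(1 + \sigma/m)^{2} + 1/m^{2}\}$. The matching lower bound uses the same identity, now invoking $\abs{1 + g^{-1} s^{\ast}}^{2} \geq (1 - \sigma/m)^{2}$ (legitimate since $\sigma \leq 1 < m$) together with $\norm{(1, s^{\ast})^{T}}^{2} \leq 1 + \sigma^{2}$, after retaining the nonnegative off-diagonal contribution $\abs{g}^{-2}$ in the numerator to match the stated $1/m^{2}$ term.

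The main technical obstacle I anticipate is the calibration of $\sigma(g)$. The fixed-point conditions $m\sigma \geq 1 + \sigma^{2}$ and $m - \sigma > 1$ must hold simultaneously on both branches of $\sigma(g) = \min\{1, (m-1)/(m(m-2))\}$, and the switch point $m = (3+\sqrt{5})/2$ where the two expressions coincide demands some algebraic care. A secondary subtlety lies in confirming that $(1, s^{\ast})$ is genuinely the \emph{top} Lyapunov direction rather than the bottom one; this requires an honest verification that the transverse section contracts vectors at rate of order $m^{-2} < 1$ per iterate, rather than merely arguing heuristically from the Möbius picture.
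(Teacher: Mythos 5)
Your fixed-point construction is correct, and it is essentially the argument behind the cited result: this paper does not reprove Lemma \ref{lem_ds} but imports it from Proposition 3.1 of \cite{Marx_Shou_Wellens_2015}, where $\sigma(g)$ arises exactly as you calibrate it, namely as the radius of the ball on which the projective graph transform $s \mapsto 1/(g+s)$ is an invariant contraction, with the LE bounds read off along the resulting invariant section. The one point to carry out honestly (which you correctly flag) is that $\int_{\mathbb{T}} \log \vert 1 + g^{-1}s^{\ast}\vert\, \ud x$ is the \emph{top} exponent; this follows cleanly from $\det D = -g^{-2}$, since the two exponents sum to $-2\int_{\mathbb{T}}\log\vert g\vert\,\ud x$ and the inequality $(m-\sigma)^{2} > \sigma^{2}$ (automatic for $m>2$, $\sigma\leq 1$) forces the exponent along your section to exceed its complementary exponent.
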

\begin{remark} \label{rem_ds}
As shown in \cite{Marx_Shou_Wellens_2015} (see Remark 3.2, therein), the lower bound $m(g) > 2$ in Lemma \ref{lem_ds} is in general {\em{optimal}}.
Note that from (\ref{eq_estimatesolutionFP}), if $m(g) > 2$, $\sigma(g) \asymp m(g)^{-1}$ yields
\begin{equation} \label{eq_LEquantitativemodulus}
\frac{-K_2}{m(g)^2} \leq L(\alpha, D) \leq \frac{K_3}{m(g)^2} ~\mbox{,}
\end{equation}
Specifically, using (\ref{eq_complexLE_ds_bd_1}), we extract numerical values for the constants $K_2, K_3$, given by
\begin{equation} \label{eq_quantificConst}
K_2 \approx 8.4985 ~\mbox{, } K_3 \approx 6.5451 ~\mbox{,}
\end{equation}
for details, see Appendix \ref{app_numerical}.
\end{remark}

\section{Acceleration and winding number}\label{app_keyfunctional}

As mentioned in Sec. \ref{sec_outline}, the asymptotic formula for the LE in (\ref{eq_asymptoticform}) will first be established away from the real axis by extending the Schr\"odinger cocycle to the complex plane. The extrapolation to zero imaginary part as well as Corollary \ref{coro_accelestim} will rely on the properties of functionals of the form
\begin{equation} \label{eq_babyaccel}
I[f](y):=  \int_\mathbb{T} \log \vert f(x + i y) \vert ~\ud x ~\mbox{,}
\end{equation}
the discussion of which forms the purpose of this section; following $f \in \mathcal{C}_h^\omega(\mathbb{T})$ denotes an arbitrary complex analytic function which does not vanish identically. 

It is easy to see that $y \mapsto I[f](y)$ is convex. In particular, the right-derivative
\begin{equation}\label{defomega}
\omega[f](y):= \frac{1}{2 \pi} D_+ I[f](y) ~\mbox{,}
\end{equation}
is a well-defined, right-continuous, increasing function in $y$. We note that (\ref{eq_babyaccel}) can in fact be viewed as the LE of the cocycle $(\alpha, D_y)$ where $D_y = (\begin{smallmatrix} f(\cdot + i y) & 0 \\ 0 & 0 \end{smallmatrix})$, in which case (\ref{defomega}) is its acceleration. We will be particularly interested in the situation when $f$ is {\em{real}}-analytic, in which case $I[f](y)$ is an even and $\omega[f](y)$ is an odd function of $y$.

The following proposition can be considered an extension of Jensen's formula, versions of which also played an important role in the original proof of the Sorets-Spencer's lower bound. 
\begin{prop} \label{prop_accelvalue}
Let $g$ denote the zero-free part of $f$ on $\mathbb{T}_h$. For all $y \in [-h,h]$, one has
\begin{equation} \label{eq_prop_accelvalue}
\omega[f](y) =  -\mathrm{ind}_{g(\cdot + i0)}(0) -N_{(y,h]}(f) ~\mbox{,}
\end{equation}
where $\mathrm{ind}_{g(\cdot+i0)}(0)$ is the winding number of $g(\cdot+i0)$ around $0$.
In particular, if $f$ is real-analytic, one has 
\begin{equation}  \label{eq_prop_accelvalue_real}
\omega[f](y) = \frac{1}{2} N_{(-y, y]}(f) ~\mbox{,  $0 \leq y \leq h$ .}
\end{equation}
\end{prop}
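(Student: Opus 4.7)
My plan is to identify $\omega[f](y)$ with minus the winding number of the closed loop $x\mapsto f(x+iy)$ about the origin, and then to evaluate this winding by factoring $f$ into a zero-free piece times canonical $1$-periodic single-zero factors.

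Convexity of $y\mapsto I[f](y)$ (immediate from the subharmonicity of $\log|f|$) ensures that $\omega[f]$ exists everywhere and is right-continuous. On the open set of heights $y$ for which $f(\cdot+iy)$ is nowhere zero, I differentiate under the integral using the pointwise identity $\partial_y\log|f(x+iy)|=-\im(f'/f)(x+iy)$. Since $f$ is $1$-periodic, $x\in[0,1]\mapsto f(x+iy)$ is a closed loop in $\C\setminus\{0\}$ and
\begin{equation*}
\int_0^1 \frac{f'(x+iy)}{f(x+iy)}\,\ud x = 2\pi i\,\mathrm{ind}_{f(\cdot+iy)}(0),
\end{equation*}
so $\tfrac{d}{dy}I[f](y)=-2\pi\,\mathrm{ind}_{f(\cdot+iy)}(0)$. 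Right-continuity of $\omega[f]$ extends this to $\omega[f](y)=-\mathrm{ind}_{f(\cdot+iy^+)}(0)$ for every $y\in[-h,h]$.

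To compute this winding, I factor $f(z)=g(z)\prod_{j} B_{z_j}(z)$ on $\T_h$, where $B_w(z):=e^{2\pi iz}-e^{2\pi iw}$ runs over the zeros $z_j=x_j+iy_j$ of $f$ in $\T_h$ counted with multiplicity; each $B_{z_j}$ is $1$-periodic with its sole zero at $z_j$, while $g$ is $1$-periodic and zero-free on $\T_h$. A direct calculation shows that $x\mapsto B_{z_j}(x+iy)$ parameterizes a counterclockwise circle of radius $e^{-2\pi y}$ centered at $-e^{2\pi i x_j}e^{-2\pi y_j}$, so its winding about $0$ equals $1$ if $y<y_j$ and $0$ if $y>y_j$; moreover, since $g$ is zero-free on $\T_h$, the argument principle forces $\mathrm{ind}_{g(\cdot+iy)}(0)$ to be independent of $y$. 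Summing these contributions yields
\begin{equation*}
\mathrm{ind}_{f(\cdot+iy^+)}(0) = \mathrm{ind}_{g(\cdot+i0)}(0) + N_{(y,h]}(f),
\end{equation*}
which, combined with the previous step, proves \eqref{eq_prop_accelvalue}.

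For the real-analytic refinement I exploit $f(\bar z)=\overline{f(z)}$: zeros occur in conjugate pairs and $I[f]$ is even in $y$. Applying \eqref{eq_prop_accelvalue} at $\pm y$, together with $N_{(y,h]}(f)=N_{[-h,-y)}(f)$, gives $\omega[f](y)-\omega[f](-y)=N_{(-y,y]}(f)$. On the other hand, evenness yields $D_+I[f](-y)=-D_-I[f](y)$, so that $\omega[f](y)+\omega[f](-y)=(2\pi)^{-1}[D_+-D_-]I[f](y)$, which vanishes at heights $y$ for which $f$ has no zero at $\pm y$. Adding the two identities and halving delivers \eqref{eq_prop_accelvalue_real} at such generic $y$, and right-continuity of $\omega[f]$ settles the remaining countably many heights. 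The principal difficulty is not any single computation but the careful bookkeeping of orientation conventions, the distinction between open and closed intervals in the zero-counting functions, and the non-uniqueness of the zero-free part $g$ (which is defined only up to constant normalizations of the canonical zero factors); the core analytic input --- the identification of $\tfrac{d}{dy}I[f]$ with a winding number --- is essentially a form of Jensen's formula and presents no real difficulty.
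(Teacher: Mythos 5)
Your argument is correct and essentially the same as the paper's: you identify $\omega[f]$ with minus the winding number of $x\mapsto f(x+iy)$ on zero-free horizontal strips (this is Observation \ref{obs_indexaccel}, which you additionally prove via differentiation under the integral and the argument principle), factor $f$ into its zero-free part times the factors $e^{2\pi i z}-e^{2\pi i z_k}$, compute each factor's winding, and invoke right-continuity, treating the real-analytic case through evenness of $I[f]$ just as the paper does via oddness of $\omega[f]$. The only caveat --- shared with the paper's own statement and proof --- concerns exceptional heights: since $y\mapsto N_{(-y,y]}(f)$ is not right-continuous when $f$ has zeros at height $\pm y$ (e.g.\ $y=0$ with real zeros), your final right-continuity step actually yields $\omega[f](y)=\tfrac{1}{2}N_{[-y,y]}(f)$ there rather than \eqref{eq_prop_accelvalue_real} literally, a discrepancy that is harmless for the way the bound is used in Sec.~\ref{sec_proofthmmain}.
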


Proposition \ref{prop_accelvalue} could be extracted from Lemma 1 in \cite{SoretsSpencer_1991}. For convenience of the reader, we include a short, alternative proof below. The argument is based on the following simple observation, which will be useful in its own right:
\begin{obs} \label{obs_indexaccel}
Let $f \in \mathcal{C}_h^\omega(\mathbb{T})$. Suppose for $-h \leq y_1 < y_2 \leq h$ that $\inf_{y_1 \leq \im(z) \leq y_2} \vert f(z) \vert >0$. Then for all $y \in [y_1, y_2]$,
\begin{equation*}
\omega[f](y) = - \mathrm{ind}_{f(\cdot+ i y_1)}(0) ~\mbox{.}
\end{equation*}
\end{obs}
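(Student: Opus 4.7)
The plan is to reduce to the winding-$0$ case by explicitly factoring out the winding behavior of $f$, and then show that on $[y_1, y_2]$ the functional $I[f](y)$ is affine in $y$ with slope $-2\pi m$, where $m := \mathrm{ind}_{f(\cdot + i y_1)}(0)$. Because $f$ is holomorphic and non-vanishing on an open neighborhood of the closed strip $\{y_1 \leq \im z \leq y_2\}$, the integer $m$ is well-defined, and by homotopy invariance it coincides with $\mathrm{ind}_{f(\cdot + i y)}(0)$ for every $y \in [y_1, y_2]$.

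To extract this integer from $I[f]$, I would introduce
\begin{equation*}
h(z) := f(z)\, e^{-2\pi i m z},
\end{equation*}
which is periodic, analytic and non-vanishing on the same neighborhood, now with winding number zero about the origin. Consequently, there exists a globally defined periodic analytic branch $L(z) := \log h(z)$ on an open neighborhood of the strip. Using $\lvert e^{-2\pi i m(x+iy)}\rvert = e^{2\pi m y}$ then gives
\begin{equation*}
\log |f(x + iy)| = \re L(x+iy) - 2\pi m y,
\end{equation*}
and hence $I[f](y) = \re \int_\mathbb{T} L(x+iy)\,\ud x - 2\pi m y$.

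The remaining step is to show that $y \mapsto \int_\mathbb{T} L(x+iy)\,\ud x$ is constant on $[y_1, y_2]$. This is immediate from differentiation under the integral sign combined with the Cauchy--Riemann equations and the $1$-periodicity of $L$:
\begin{equation*}
\frac{d}{dy}\int_0^1 L(x+iy)\,\ud x = i\int_0^1 L'(x+iy)\,\ud x = i\bigl[L(1+iy) - L(iy)\bigr] = 0.
\end{equation*}
Therefore $I[f](y) = C - 2\pi m y$ is affine on $[y_1, y_2]$, and the right-derivative gives $\omega[f](y) = \frac{1}{2\pi} D_+ I[f](y) = -m$, as claimed. I do not expect a real obstacle here; the only mild care required is to ensure that the branch $L$ extends to an open neighborhood of the closed strip (so that differentiation under the integral is legitimate), but this follows from the hypothesis $\inf_{y_1 \leq \im z \leq y_2} |f(z)| > 0$ together with the analyticity of $f$ in a neighborhood of $\mathbb{T}_h$.
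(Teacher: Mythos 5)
Your argument is correct. Note that the paper itself offers no written proof of this observation (it is stated as a ``simple observation''); the intended one-line argument is presumably to differentiate under the integral, $\partial_y \log\vert f(x+iy)\vert = \re\bigl(i\,f'/f\bigr) = -\im\bigl(f'/f\bigr)$, and then invoke the argument principle, $\int_\mathbb{T} (f'/f)(x+iy)\,\ud x = 2\pi i\, \mathrm{ind}_{f(\cdot+iy)}(0)$, which is constant in $y$ by homotopy invariance. Your route is a more explicit version of the same mechanism: by factoring out $e^{2\pi i m z}$ you arrange winding number zero, which is exactly the condition for a single-valued $1$-periodic analytic logarithm $L$ on the (non-simply-connected) strip, and then the mean of $\re L$ over $\mathbb{T}$ is constant in $y$ by periodicity. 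What your version buys is that it never differentiates $\log\vert f\vert$ directly and makes the affine structure $I[f](y) = C - 2\pi m y$ completely explicit; what it costs is the extra step of justifying the global branch, which you correctly reduce to the vanishing of the winding of $h$ along the generator of the strip's fundamental group. One small point worth making explicit: the right-derivative at the endpoint $y = y_2$ uses values of $I[f]$ for $y$ slightly above $y_2$, so you need the affine formula on an interval extending a bit beyond $y_2$; this is indeed available, since $\inf$ of $\vert f\vert$ over the closed strip being positive, together with analyticity of $f$ on a neighborhood of $\mathbb{T}_h$, gives non-vanishing on a slightly larger strip, so your construction of $L$ (and hence the affine formula with the same slope) persists there --- your closing remark about extending to an open neighborhood of the closed strip covers this, but it deserves to be said as the reason the case $y = y_2$ is legitimate.
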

\begin{proof}[Proof of Proposition \ref{prop_accelvalue}]
Let $z_k$, $1 \leq k \leq N_h(f)$, be the zeros of $f$ on $\mathbb{T}_h$, i.e.,
\begin{equation*}
f(z) = g(z) \prod_{k=1}^{N_h(f)} (e^{2\pi iz} - e^{2\pi iz_k}) ~\mbox{, $z \in \mathbb{T}_h$ .}
\end{equation*}
Thus, if $z = x + i y \in \mathbb{T}_h$ with $y \neq \im z_k$, for all $1 \leq k \leq N_h(f)$, Observation \ref{obs_indexaccel} implies
\begin{equation*}
\omega[f](y) = - \mathrm{ind}_{g(\cdot + i 0)}(0) - \sum_{k=1}^{N_h(f)} \mathrm{ind}_{\mathrm{e}^{2 \pi i \cdot}\mathrm{e}^{-2 \pi y} - e^{2\pi iz_k}}(0) ~\mbox{.}
\end{equation*}
Since
\begin{equation*}
\mathrm{ind}_{\mathrm{e}^{2 \pi i \cdot}\mathrm{e}^{-2 \pi y} - e^{2\pi iz_k}}(0) = \begin{cases} 1 & ~\mbox{, } y < \im z_k ~\mbox{,} \\ 0 & ~\mbox{, } y \geq \im z_k ~\mbox{,} \end{cases}
\end{equation*}
(\ref{eq_prop_accelvalue}) follows by right-continuity of $\omega[f](y)$ and $N_{(y, h]}(f)$.

Finally, if $f$ is real-analytic, oddness of $\omega[f](y)$ allows to conclude
\begin{equation*}
- \mathrm{ind}_{g(. + i 0)}(0) = \frac{1}{2} ( N_{(-y, h]}(f) + N_{(y, h]}(f) ) ~\mbox{,}
\end{equation*}
whence, for $0 \leq y \leq h$, (\ref{eq_prop_accelvalue}) implies
\begin{equation*}
\omega[f](y) = \frac{1}{2} ( N_{(-y,h]}(f) - N_{(y, h]}(f) ) ~\mbox{,}
\end{equation*}
which reduces to (\ref{eq_prop_accelvalue_real}).
\end{proof}

\section{Proof of Theorem \ref{thm_main}} \label{sec_proofthmmain}

Fix $0<\rho<h$, let $N=\hat{N}_{{2\rho}}(f)$ and $\lambda_0=2^{2N+1}$. 
Take $|\lambda| > \lambda_0$ and $\delta={N}K_1^{-1} \hat{\beta}_{\rho}(f)^{{-1/N}} |\lambda|^{-\frac{2}{2N+1}}$.

By Lemma \ref{DK}, for any $E\in \mathbb{R}$ and $|\lambda|>\lambda_0$, we have
\begin{align}\label{lambdadelta1}
\min_{x\in \mathbb{T}} |\lambda f(x+iy)-E|\geq \hat{\beta}_{\rho}(f) |\lambda| (\frac{K_1 \delta}{N})^{N}=|\lambda|^{\frac{1}{2N+1}}>2,
\end{align}
for some $y=y(\lambda, E, f, \rho, h)\in [\frac{\delta}{2}, \delta]$. Continuity of $\min_{x\in \mathbb{T}} |\lambda f(x+iy)-E|$ in $y$ then shows that in fact
\begin{align} \label{y1y2}
\min_{x\in \mathbb{T}} |\lambda f(x+iy)-E|>2 ~\mbox{, for all $y\in [y_1, y_2]$,}
\end{align}
where $[y_1, y_2]$ is some sub-interval of $[{\delta}/{2}, \delta]$.

For $y \in [y_1, y_2]$ we can thus factorize
\begin{equation} \label{eq_LEOFFreal}
A_{\lambda, E; y}(x): = \lambda (\frac{E}{\lambda} - f(x + i y)) \underbrace{\begin{pmatrix} 1 & -\frac{1}{\lambda} (\frac{E}{\lambda} - f(x + i y))^{-1} \\ \frac{1}{\lambda} (\frac{E}{\lambda} - f(x + i y))^{-1} & 0   \end{pmatrix}}_{=: D_{\lambda, E; y}(x)}  ~\mbox{,}
\end{equation}
which allows the LE to be written in the form,
\begin{equation} \label{eq_fact_le}
L(\alpha, A_{\lambda, E; y}) = \ln{|\lambda|}+ \int_\mathbb{T} \log \vert {E}/{\lambda} - f(x + i y) \vert ~\ud x + L(\alpha, D_{\lambda, E; y}) ~\mbox{.}
\end{equation}

Using (\ref{y1y2}), we may apply Lemma \ref{lem_ds} and Remark \ref{rem_ds} to the cocycle $(\alpha, D_{\lambda, E; y})$, which implies that $(\alpha, D_{\lambda, E; y})$ induces a dominated splitting with 
\begin{equation}\label{controlD}
|L(\alpha, D_{\lambda, E; y})| \leq K_2 |\lambda|^{-\frac{2}{2N+1}} ~\mbox{.}
\end{equation}
where $K_2$ is an absolute constant quantified in (\ref{eq_quantificConst}).

Moreover, since the LE is known to be smooth about dominated splittings \cite{Ruelle_1979_LEanalytic} and $D_{\lambda, E; y} \to (\begin{smallmatrix} 1 & 0 \\ 0 & 0 \end{smallmatrix})$ as $\lambda\rightarrow\infty$, constancy of the limiting cocycle $(\alpha, (\begin{smallmatrix} 1 & 0 \\ 0 & 0 \end{smallmatrix}))$ and quantization of the acceleration shows that $(\alpha, D_{\lambda, E; y})$ itself has zero acceleration. We emphasize that quantization of the acceleration is indeed necessary here to conclude zero acceleration. Smoothness of the LE in the regime of dominated splittings ($\mathcal{DS}$) already yields continuity of the acceleration of $(\alpha, D_{\lambda, E; y})$ once $\lambda$ is sufficiently large so that Lemma \ref{lem_ds} applies (i.e. for $\vert \lambda \vert \geq \lambda_0$). Knowing, in addition, that the acceleration is quantized, implies that it is in fact {\em{constant}} and thus equal to the acceleration of $(\alpha, (\begin{smallmatrix} 1 & 0 \\ 0 & 0 \end{smallmatrix}))$ for all $\vert \lambda \vert \geq \lambda_0$.

Thus, we conclude from (\ref{defomega}), (\ref{y1y2}) and (\ref{eq_fact_le}) that one has
\begin{eqnarray} \label{eq_1}
\omega(\alpha, E, \lambda; y) = \omega[{E}/{\lambda}-f](y)~\mbox{,}
\end{eqnarray}
for all $y_1 \leq y < y_2$. 

Since $L(\alpha, A_{\lambda, E, y})$ is even and convex in $y$, we get
\begin{align}\label{ac1}
0\leq \omega(\alpha, \lambda, E;0)\leq \omega[{E}/{\lambda}-f](y),
\end{align}
and
\begin{align}\label{Ly0difference}
|L(\alpha, A_{\lambda, E})-L(\alpha, A_{\lambda, E; y})|\leq 2\pi \omega [{E}/{\lambda}-f](y)\ |y|.
\end{align}
Note that by (\ref{eq_prop_accelvalue_real}), $\omega[{E}/{\lambda}-f](t)\leq \frac{1}{2} N$ for any $0<t<\rho$, hence we conclude the upper bound on the accerleration given in Corollary \ref{coro_accelestim}.
 
Finally, combining (\ref{eq_fact_le}), (\ref{controlD}) with (\ref{Ly0difference}) yields
\begin{align*}
&|L(\alpha, A_{\lambda, E})-\log{|\lambda|}-\int_{\mathbb{T}} \log \vert {E}/{\lambda} - f(x) \vert ~\ud x | \notag\\ 
\leq &|L(\alpha, A_{\lambda,E})-L(\alpha, A_{\lambda, E;y})|+|L(\alpha, A_{\lambda, E;y})-\ln{|\lambda|}-\int_{\mathbb{T}} \log \vert {E}/{\lambda} - f(x) \vert ~\ud x| \notag\\
\leq & N\pi |y|+\left|\int_{\mathbb{T}} \log \vert {E}/{\lambda} - f(x) \vert ~\ud x-\int_{\mathbb{T}} \log \vert {E}/{\lambda} - f(x+iy) \vert ~\ud x\right| + K_2 \lambda^{-\frac{2}{2N+1}} \notag\\
=&  N\pi |y|+2\pi \left|\int_{0}^{y} \omega[{E}/{\lambda}-f](t) ~\ud t\right|+ K_2 |\lambda|^{-\frac{2}{2N+1}} \notag\\
\leq &2N\pi  |y|+ K_2 |\lambda|^{-\frac{2}{2N+1}} \notag\\
\leq &(2N^2 K_1^{-1}\hat{\beta}_{\rho}(f)^{-1/N} \pi+K_2) |\lambda|^{-\frac{2}{2N+1}}~\mbox{,}
\end{align*}
which completes the proof of Theorem \ref{thm_main} with the constant $C(f, \rho)$ given by
\begin{equation} \label{errorestimate}
C(f, \rho) = 2N^2 K_1^{-1}\hat{\beta}_{\rho}(f)^{-1/N} \pi+K_2 ~\mbox{.}
\end{equation}

\section{Stratified estimates of the accerleration} \label{sec_improvements}

This final section addresses improvements of the estimate of the acceleration given in Corollary \ref{coro_accelestim}. Whereas this upper bound was merely used as a tool to extrapolate the asymptotic formula first proven for $y>0$ (see (\ref{eq_fact_le}) and (\ref{controlD})) to $y=0$, estimates on the acceleration are interesting in their own right from the point of view of Avila's global theory, see the comments in Sec. \ref{sec_intro_accel}.

If one is only interested in the complexified LE on the spectrum $\Sigma(\lambda)$ of $H_{\lambda;x}$, the bound
\begin{equation*}
\Sigma(\lambda) \subseteq [-2 +\lambda \min_{x \in \mathbb{T}} f(x), 2 + \lambda \max_{x \in \mathbb{T}} f(x)] ~\mbox{ }
\end{equation*}
allows to restrict the values of $\mu =E/ \lambda$ for $\lambda>0$ to the compact set $[-2/\lambda +\min_{x \in \mathbb{T}} f(x), 2/\lambda + \max_{x \in \mathbb{T}} f(x)]$.

Let $\mathcal{Z}(f^\prime)=\{x\in \T\ |\ f^\prime(x)=0\}$ be the set of critical points and denote
\begin{equation}
\mathcal{D}(f):=\{f(x)\ |\ x\in \mathcal{Z}(f^\prime)\} ~\mbox{.}
\end{equation}
Note that for $[\mu_1, \mu_2]\subset [\min_{x\in \mathbb{T}}f(x), \max_{x \in \mathbb{T}} f(x)]\setminus \mathcal{D}(f)$, the number of zeros $N_0(f-\mu)$ (counting multiplicity) on $\mathbb{T}$ is {\it non-zero and constant} in $\mu$ on $[\mu_1, \mu_2]$.

Using an argument similar in spirit to what was used to prove Lemma \ref{DK}, we show in Appendix \ref{app_zerosets} that the following quantitative estimate holds.

\begin{prop} \label{obs_constantnumberzeros}
For any interval $[\mu_1, \mu_2]\subset [\min_{x\in \mathbb{T}}f(x), \max_{x\in \mathbb{T}}f(x)]\setminus\mathcal{D}(f)$, there exists $0 < R_0 = R_0(f, \mu_1, \mu_2)$ such that for every $0 < \delta \leq R_0$
\begin{equation}
\min_{x \in \mathbb{T}, \delta/2 \leq y \leq \delta} \vert f(x + i y)-\mu \vert \geq \hat{\beta}(f) \left(\frac{\delta}{2}\right)^{N_0(f-\mu_1)}  ~\mbox{,}
\end{equation}
where $g_\mu$ is the zero free part of $f-\mu$ on $\mathbb{T}$ and 
\begin{equation}\label{eq_betahat}
\hat{\beta}(f):= \min_{\mu \in [\mu_1, \mu_2]}\ \min_{z \in \mathbb{T}_{R_0}} \vert g_\mu(z) \vert.
\end{equation}
\end{prop}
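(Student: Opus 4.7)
My plan is to follow the strategy of Lemma~\ref{DK} but take advantage of the key simplification that $[\mu_1,\mu_2]\cap\mathcal{D}(f)=\emptyset$: the zeros of $f-\mu$ on $\mathbb{T}$ are then \emph{simple}, and their number $N_0:=N_0(f-\mu_1)$ is \emph{constant} in $\mu\in[\mu_1,\mu_2]$. By the implicit function theorem, these zeros can be labelled $x_1(\mu),\dots,x_{N_0}(\mu)\in\mathbb{T}$ and depend analytically on $\mu\in[\mu_1,\mu_2]$. This allows a cleaner Weierstrass-type factorization than in Lemma~\ref{DK}, since only real zeros will need to be extracted.

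The first step I would carry out is the construction of a single radius $R_0=R_0(f,\mu_1,\mu_2)>0$ that isolates these real zeros \emph{uniformly in $\mu$}, in the sense that for every $\mu\in[\mu_1,\mu_2]$ the only zeros of $f-\mu$ inside $\mathbb{T}_{R_0}$ are $x_1(\mu),\dots,x_{N_0}(\mu)$. For each fixed $\mu$ such an $R(\mu)>0$ exists by isolatedness of zeros of a non-identically vanishing analytic function; a Rouch\'e argument combined with the compactness of $[\mu_1,\mu_2]$ then upgrades this to a single $R_0$. This is the step I expect to be the main obstacle, since quantifying $R_0$ requires a lower bound on $\inf_{\mu,k}|f^{\prime}(x_k(\mu))|$ and on the distance from $[\mu_1,\mu_2]$ to $\mathcal{D}(f)$; these are what the statement hides behind the implicit dependence of $R_0$ on $(f,\mu_1,\mu_2)$.

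Once $R_0$ is in hand, I would factorize, for each $\mu\in[\mu_1,\mu_2]$ and $z\in\mathbb{T}_{R_0}$,
\begin{equation*}
f(z)-\mu \;=\; g_\mu(z)\prod_{k=1}^{N_0}\left(e^{2\pi iz}-e^{2\pi ix_k(\mu)}\right),
\end{equation*}
where $g_\mu$ is analytic and zero-free on $\mathbb{T}_{R_0}$ by construction of $R_0$. Lower semicontinuity of $\min_{z\in\mathbb{T}_{R_0}}|g_\mu(z)|$ in $\mu$, together with compactness of $[\mu_1,\mu_2]$, then guarantees $\hat{\beta}(f)>0$ as defined in (\ref{eq_betahat}).

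Finally, for $z=x+iy$ with $\delta/2\leq y\leq\delta\leq R_0$ and $x_k(\mu)\in\mathbb{R}$, $|e^{2\pi iz}|=e^{-2\pi y}$ immediately yields
\begin{equation*}
|e^{2\pi iz}-e^{2\pi ix_k(\mu)}| \;\geq\; 1-e^{-2\pi y} \;\geq\; 1-e^{-\pi\delta} \;\geq\; \delta/2,
\end{equation*}
where the last inequality follows from the elementary bound $1-e^{-\pi t}\geq t/2$ valid on a neighborhood of $0$ (which can be absorbed into the choice of $R_0$). Multiplying the $N_0$ factor bounds together with the estimate $|g_\mu(z)|\geq \hat{\beta}(f)$ produces
\begin{equation*}
|f(x+iy)-\mu| \;\geq\; \hat{\beta}(f)\,(\delta/2)^{N_0},
\end{equation*}
which is the asserted inequality.
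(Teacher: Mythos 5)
Your argument is correct for the proposition as stated, and its second half (factor out the real zeros of $f-\mu$, bound each factor below by $\delta/2$ using $\operatorname{Im} z\geq \delta/2$, and bound the zero-free part below by $\hat{\beta}(f)$ via lower semicontinuity in $\mu$ plus compactness) is essentially the paper's argument. Where you genuinely diverge is in the construction of $R_0$: you obtain it \emph{softly}, by a Rouch\'e/Hurwitz-plus-compactness contradiction argument resting on the simplicity of the real zeros when $[\mu_1,\mu_2]\cap\mathcal{D}(f)=\emptyset$, whereas the paper (Appendix B) builds $R_0$ \emph{explicitly}: Cauchy estimates around each real zero controlled by the modulus of transversality $\tau_0=\min_{x\in f^{-1}([\mu_1,\mu_2])}\vert f^\prime(x)\vert$, a complementary estimate away from the zeros via the factorized lower bound $\beta_0\zeta_0^{N_0}$, and a small geometric computation giving the closed-form radius in (\ref{eq_example_radius}). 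Your route is shorter and suffices for the bare existence claim; the paper's route buys the quantitative formula for $R_0$ that the Remark following the proposition explicitly advertises (and which feeds into the threshold $\tilde{\lambda}_0$ in (\ref{eq_lambdatilde})), and which your sketch concedes would require exactly the ingredients ($\inf\vert f^\prime(x_k(\mu))\vert$, i.e.\ $\tau_0$) the paper makes explicit. Two minor points to keep straight: your factorization uses the periodic factors $e^{2\pi i z}-e^{2\pi i x_k(\mu)}$ while the paper's appendix uses linear factors $(z-x_k(\mu))$, so your $g_\mu$ (hence the numerical value of $\hat{\beta}(f)$) differs from the paper's by a bounded zero-free factor — harmless, but the convention must be fixed consistently in (\ref{eq_betahat}); and your elementary bound $1-e^{-\pi\delta}\geq \delta/2$ indeed needs $\delta$ bounded (it holds for all $\delta\leq 1$), which is automatic here since $R_0<h<1$, so no extra shrinking of $R_0$ is actually required.
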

\begin{remark}
In (\ref{eq_example_radius}), we explicitly quantify $R_0$ in terms of the properties of the family $\{f- \mu, \mu \in [\mu_1, \mu_2]\}$; for further details see Appendix \ref{app_zerosets}.
\end{remark}
We mention that the minimum taken with respect to $\mu$ in (\ref{eq_betahat}) exists by lower-semicontinuity of the map $\mu \mapsto \min_{z \in \mathbb{T}_{R_0}} \vert g_\mu(z) \vert$, see e.g. Proposition 4.2 in \cite{KleinDuarte_2013_posLE}.

Hence, under the hypotheses of Proposition \ref{obs_constantnumberzeros}, we can replace Observation \ref{obs_key} and Lemma \ref{DK} by Proposition \ref{obs_constantnumberzeros}. Then, the same arguments as in Sec. \ref{thm_main} imply that given $\lambda> 0$ large enough to guarantee
\begin{equation} \label{eq_guarantee}
\delta = \lambda^{-\frac{2}{2 N_0 + 1}} < R_0 ~\Leftrightarrow~ \lambda > R_0^{-\frac{2N_0 + 1}{2}} ~\mbox{,}
\end{equation}
we obtain the analogue of (\ref{eq_LEOFFreal}) for $E \in [\mu_1 \lambda , \mu_2 \lambda]$:
\begin{equation} \label{eq_LEOFFreal_1}
L(\alpha, A_{\lambda, E; y}) = \log{\lambda} + \int_\mathbb{T} \log \vert E/\lambda - f(x + i y) \vert ~\ud x + L(\alpha, D_{\lambda, E; y}) ~\mbox{,}
\end{equation}
with however the difference that (\ref{eq_LEOFFreal_1}) holds for {\em{all}} $y \in (\delta/2, \delta)$, independent of $E\in [\mu_1 \lambda, \mu_2 \lambda]$.

Then, since
\begin{equation*}
\lambda \vert {E}/{\lambda} - f(x + i y) \vert \geq  2^{-N_0} \hat{\beta}(f)\lambda^{\frac{1}{2 N_0 + 1}} ~\mbox{,}
\end{equation*}
Lemma \ref{lem_ds} implies that $(\alpha, D_{\lambda, E; y})$ induces a dominated splitting provided that 
\begin{equation}\label{eq_guarantee2}
2^{-N_0} \hat{\beta}(f) \lambda^{\frac{1}{2 N_0 + 1}} \geq 2  ~\Leftrightarrow~ \lambda > (2^{N_0}\hat{\beta}(f))^{2N_0+1} ~\mbox{,}
\end{equation}
which, and using (\ref{eq_guarantee}), is guaranteed if
\begin{equation} \label{eq_lambdatilde}
\lambda> \max \{ 2^{N_0}\hat{\beta}(f) ~,~ R_0^{-{1}/{2}} \}^{2 N_0 + 1} =: \tilde{\lambda}_0(f,\mu_1,\mu_2) ~\mbox{.}
\end{equation}

In particular, for $\vert \lambda \vert > \tilde{\lambda}_0$, Remark \ref{rem_ds} then yields 
\begin{equation*}
L(\alpha, D_{\lambda, E; y}) = O(\lambda^{-\frac{2}{2 N_0 + 1}}) ~\mbox{.}
\end{equation*}
Finally, from (\ref{eq_LEOFFreal_1}), smoothness of the LE locally about dominated splittings implies 
\begin{equation}
0 \leq \omega(\alpha, \lambda, E; y) = \omega[E/\lambda  - f](y) = \frac{1}{2} N_0 ~\mbox{, for all } y \in [ \delta/2 , \delta] ~\mbox{,}
\end{equation}
which gives rise to the desired improved estimate of the acceleration at $y=0$:
\begin{equation} \label{eq_improvedaccel}
\omega(\alpha, \lambda, E; 0) \leq \frac{1}{2} N_0(f - \mu_1) ~\mbox{.}
\end{equation}
The acceleration at $y=0$ can hence simply be estimated by counting the intersections of the graph of $f$ with horizontal lines at heights $\mu = \mu_1$.

In conclusion, relying on convexity of the complexified LE, the same argument as in the end of Sec. \ref{sec_proofthmmain} hence leads to the following stratified analogue of Theorem \ref{thm_main}.
\begin{theorem} \label{prop_strengthned}
Given a quasi-periodic Schr\"odinger operator (\ref{eq_schrodop}) with $f \in \mathcal{C}_h^\omega(\mathbb{T}; \mathbb{R})$ and irrational $\alpha$. 
For any interval $[\mu_1, \mu_2] \subset [\min_{x\in \mathbb{T}}f(x), \max_{x\in \mathbb{T}}f(x)]\setminus \mathcal{D}(f)$, there exist $0 < \tilde{\lambda}_0 = \tilde{\lambda}_0(f, \mu_1, \mu_2)$ (quantified in (\ref{eq_lambdatilde})) such that for all $\lambda> \tilde{\lambda}_0$ and $E\in [\mu_1 \lambda, \mu_2  \lambda ]$, one has
\begin{equation} \label{eq_LEOFFreal_1prime}
L(\alpha, A_{\lambda, E}) = \log{\lambda}+ \int_\mathbb{T} \log \vert E/\lambda - f(x) \vert ~\ud x + O(\lambda^{-\frac{2}{2N_0+1}}) ~\mbox{.}
\end{equation}
Moreover, in this case, the acceleration satisfies the upper bound
\begin{equation}
0 \leq \omega(\alpha, \lambda, E; 0) \leq \frac{1}{2} N_0(f-\mu_1) ~\mbox{.}
\end{equation}
\end{theorem}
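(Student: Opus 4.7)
The plan is to mimic the proof of Theorem \ref{thm_main} from Section \ref{sec_proofthmmain}, replacing Lemma \ref{DK} by the stratum-adapted estimate of Proposition \ref{obs_constantnumberzeros}. The crucial structural improvement is that Proposition \ref{obs_constantnumberzeros} delivers a lower bound on $\vert \lambda f(x+iy) - E \vert$ that is uniform in $E/\lambda \in [\mu_1, \mu_2]$ and, more importantly, holds for \emph{every} $y \in [\delta/2, \delta]$ rather than only for some energy-dependent $y$; this uniformity is available precisely because $N_0(f-\mu)$ is constant on the stratum $[\mu_1, \mu_2]$.

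First I will set $\delta = \lambda^{-2/(2N_0+1)}$ and impose $\lambda > \tilde{\lambda}_0$ so that simultaneously $\delta \leq R_0$ (the hypothesis of Proposition \ref{obs_constantnumberzeros}) and $2^{-N_0}\hat{\beta}(f) \lambda^{1/(2N_0+1)} \geq 2$ (the hypothesis of Lemma \ref{lem_ds}); these are exactly the two constraints recorded in (\ref{eq_guarantee}) and (\ref{eq_guarantee2}) and combined in (\ref{eq_lambdatilde}). Proposition \ref{obs_constantnumberzeros} then gives $\min_{x \in \mathbb{T}} \vert \lambda f(x+iy) - E \vert \geq 2^{-N_0}\hat{\beta}(f)\lambda^{1/(2N_0+1)} > 2$ for every $y \in [\delta/2, \delta]$ and every $E \in [\mu_1 \lambda, \mu_2 \lambda]$.

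Next I factor the phase-complexified Schr\"odinger cocycle as in (\ref{eq_LEOFFreal}) to produce (\ref{eq_LEOFFreal_1}), with remainder $L(\alpha, D_{\lambda, E; y})$. The lower bound just obtained lets me invoke Lemma \ref{lem_ds}, so $(\alpha, D_{\lambda, E; y})$ induces a dominated splitting and Remark \ref{rem_ds} bounds $\vert L(\alpha, D_{\lambda, E; y}) \vert = O(\lambda^{-2/(2N_0+1)})$. Ruelle's smoothness of the LE about dominated splittings, combined with quantization of the acceleration, then forces $(\alpha, D_{\lambda, E; y})$ to inherit zero acceleration from its $\lambda \to \infty$ limit $(\alpha, (\begin{smallmatrix}1 & 0 \\ 0 & 0\end{smallmatrix}))$; this yields $\omega(\alpha, \lambda, E; y) = \omega[E/\lambda - f](y)$, and by Proposition \ref{prop_accelvalue} together with constancy of $N_0(f-\mu)$ on the stratum, this common value equals $\tfrac{1}{2} N_0(f - \mu_1)$ for all $y \in (\delta/2, \delta)$.

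Finally I extrapolate to $y=0$ exactly as at the end of Section \ref{sec_proofthmmain}: convexity and evenness of $y \mapsto L(\alpha, A_{\lambda, E; y})$ yield $0 \leq \omega(\alpha, \lambda, E; 0) \leq \tfrac{1}{2} N_0(f - \mu_1)$ and $\vert L(\alpha, A_{\lambda, E}) - L(\alpha, A_{\lambda, E; y}) \vert \leq 2\pi \, \omega[E/\lambda - f](y) \, y$; combining this with (\ref{eq_LEOFFreal_1}) and the dominated-splitting error bound produces the asymptotic (\ref{eq_LEOFFreal_1prime}). The only substantive new ingredient is Proposition \ref{obs_constantnumberzeros}, which is proved separately in Appendix \ref{app_zerosets}; everything else is assembly. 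The main subtlety to watch is the \emph{uniformity across the stratum}: because a single $y$-interval $[\delta/2, \delta]$ serves every $E \in [\mu_1\lambda, \mu_2\lambda]$ simultaneously, one $O(\lambda^{-2/(2N_0+1)})$ constant controls the error uniformly in $E$, which is exactly what makes the stratified bound $\tfrac{1}{2} N_0(f-\mu_1)$ sharper than the global bound $\tfrac{1}{2} N$ of Corollary \ref{coro_accelestim}.
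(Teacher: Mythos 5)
Your proposal is correct and follows essentially the same route as the paper: replace Lemma \ref{DK} by Proposition \ref{obs_constantnumberzeros} (which, because $N_0(f-\mu)$ is constant on the stratum, gives a lower bound uniform in both $y\in[\delta/2,\delta]$ and $E/\lambda\in[\mu_1,\mu_2]$), factor as in (\ref{eq_LEOFFreal}), invoke Lemma \ref{lem_ds} and Remark \ref{rem_ds} for the dominated-splitting error $O(\lambda^{-2/(2N_0+1)})$, use Ruelle smoothness plus quantization to identify $\omega(\alpha,\lambda,E;y)=\omega[E/\lambda-f](y)=\tfrac12 N_0(f-\mu_1)$, and extrapolate to $y=0$ via evenness and convexity exactly as at the end of Section \ref{sec_proofthmmain}. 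You also correctly single out the genuine novelty — that one $y$-interval and one error constant now serve every $E$ in the stratum — which is precisely what the paper emphasizes.
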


\appendix

\section{Proof of Lemma \ref{DK}} \label{app_DuarteKlein}
For completeness and the reader's convenience, we include the proof of Lemma \ref{DK} below:
\begin{proof}
Let us consider 
\begin{align}\label{Cfac}
|f(x+iy)-\mu|=\left| g_{2\rho}(f-\mu)(x+iy) \prod_{j=1}^{\tilde{N}_{2\rho}(f-\mu)} \frac{e^{2\pi i (x+iy)}-e^{2\pi iz_{\mu, j}}}{2e^{2\pi}+2} \right|,
\end{align}
where $|\mathrm{Im}z_{\mu,j}|< 2\rho$ for $1\leq j\leq \tilde{N}_{2\rho}(f-\mu)$.

Let 
\begin{align*}
I_j=[\mathrm{Im}z_{\mu,j}-\frac{\delta}{2\pi N_{2\rho}(f-\mu)},\ \mathrm{Im}z_{\mu,j}+\frac{\delta}{2\pi N_{2\rho}(f-\mu)}].
\end{align*}
Then $|\bigcup_{j=1}^{\tilde{N}_{2\rho}(f-\mu)} I_j|\leq \frac{\delta \tilde{N}_{2\rho}(f-\mu)}{\pi N_{2\rho}(f-\mu)}<\frac{\delta}{2}$. This implies there exists $y_{\mu,\delta}$ such that $\frac{\delta}{2}< y_{\mu,\delta}< \delta$ and 
\begin{align}\label{disyzj}
|y_{\mu,\delta}-\mathrm{Im}z_{\mu,j}|> \frac{\delta}{2\pi N_{2\rho}(f-\mu)}.
\end{align}
Combining (\ref{Cfac}) with (\ref{disyzj}), we have
\begin{align*}
|f(x+iy_{\mu,\delta})-\mu|\geq &\beta_{\rho}(f-\mu)\left(\frac{e^{-2\pi }\delta} {(2e^{2\pi} +2) N_{2\rho}(f-\mu)}\right)^{\tilde{N}_{2\rho}(f-\mu)} \\
\geq &\hat{\beta}_{\rho}(f)\left(\frac{e^{-2\pi} \delta}{(2e^{2\pi}+2)\hat{N}_{2\rho}(f)}\right)^{ \hat{N}_{2\rho}(f)}. 
\end{align*}
In particular, we quantified the constant $K_1$ in Lemma \ref{DK} as
\begin{equation} \label{eq_constantk1}
K_1 = \dfrac{e^{-2\pi}}{2e^{2\pi}+2} ~\mbox{.}
\end{equation}
\end{proof}

\section{Estimates on the zero set of real analytic functions} \label{app_zerosets}

Let $0 \not\equiv f \in \mathcal{C}_h^\omega(\mathbb{T})$, where without loss, we take $0 < h < 1$. Suppose $f$ has zeros on $\mathbb{T}$. Since the zeros of $f$ are isolated, there exists $R = R(f) $ such that for all $0 < \delta\leq R$, one can find $\eta = \eta(f, \delta)$ such that,
\begin{equation} \label{eq_isolatedzeros_estim}
\min_{x \in \mathbb{T}, \delta/2 \leq y \leq \delta} \vert f(x + i y) \vert > \eta ~\mbox{.}
\end{equation}
To purpose of the following is to quantify the dependencies of $R$ and $\eta$. 

To this end, let $x_1, \dots, x_k$ denote the distinct zeros of $f$ on $\mathbb{T}$ with associated multiplicities $n_1, \dots, n_k$. Expanding $f$ in a power series about $x_j$, for $\vert z - x_j \vert \leq h$, one can write,
\begin{eqnarray} \label{eq_localexpzeros}
f(z) = (z - x_j)^{n_j}(a_{n_j}^{(j)} + h_j(z) ) ~\mbox{, }
h_j(z) = \sum_{m=1}^\infty a_{n_j + m}^{(j)} (z - x_j)^m ~\mbox{,}
\end{eqnarray}
where $a_{m}^{(j)} := \frac{1}{m!} f^{(m)}(x_j)$, $m \in \mathbb{N}$. 

Cauchy estimates readily show that for $0 < r < h$,
\begin{equation*}
\sup_{z \in \overline{B_r(x_j)}} \vert h(z) \vert \leq \dfrac{\Vert f \Vert_h}{h^{n_j}} \dfrac{r/h}{1- r/h} ~\mbox{.}
\end{equation*}
Thus, using (\ref{eq_localexpzeros}), we obtain that $f(z) \neq 0$ on $0 < \vert z - x_j \vert \leq r$ if
\begin{equation*}
r < h \dfrac{\vert a_{n_j}^{(j)} \vert h^{n_j}}{\Vert f \Vert_h + \vert a_{n_j}^{(j)} \vert h^{n_j}} = h \theta_{f}(\vert a_{n_j}^{(j)} \vert h^{n_j}) ~\mbox{.} 
\end{equation*}
Here, for $c>0$, we set $\theta_c(x): = \frac{x}{c + x}$, $x \in [0, + \infty)$, and to simplify notation, we write $\theta_f:= \theta_{\Vert f \Vert_h}$. Note that $0 \leq \theta_c(x) \nearrow 1$, as $x \to + \infty$.

Defining the {\em{modulus of transversality}} of $f$ on $\mathbb{T}$ by
\begin{equation} \label{eq_modulusofTransv}
\tau(f): = \min_{1 \leq j \leq k} \frac{1}{j!} \vert f^{(n_j)}(x_j) \vert ~\mbox{,}
\end{equation}
and letting $N(f):= \sum_{j=1}^k n_j$, we conclude (using $0< h <1$) that 
\begin{eqnarray}
f(z) \neq 0 ~\mbox{, } z \in \bigcup_{j=1}^{k} \left( \overline{B_r(x_j)}\setminus \{x_j\} \right) ~\mbox{, if } \\  \nonumber
r < h \theta_{f}\left( \tau(f) h^{N(f)} \right) =: \zeta(f) ~\mbox{.}
\end{eqnarray}

For arbitrary $x \in \mathbb{T} \setminus \left( \cup_{j=1}^{k} \overline{B_{\zeta(f)}(x_j)} \right)$, we follow a similar strategy, writing
\begin{equation*}
f(z) = f(x) + h_{x_0}(z) ~\mbox{, $\vert z - x \vert \leq h$ ,}
\end{equation*}
in which case Cauchy estimates yield
\begin{equation*}
\sup_{z \in \overline{B_r(x)}} \vert h_{x}(z) \vert \leq \Vert f \Vert_h \dfrac{r/h}{1 - r/h} ~\mbox{, for $0<r<h$ .}
\end{equation*} 
Since $f(x) \neq 0$, this in turn implies
\begin{equation}  \label{eq_radii_1}
f(z) \neq 0 ~\mbox{, } z \in \overline{B_r(x)} ~\mbox{, if } r < h \theta_f( \vert f(x) \vert )  ~\mbox{.}
\end{equation}

To obtain a uniform lower bound for $\vert f(x) \vert$ on $\mathbb{T} \setminus \left( \cup_{j=1}^{k} \overline{B_{\zeta(f)}(x_j)} \right)$, we factorize $f$ on $\mathbb{T}$ according to its zeros,
\begin{equation} \label{eq_factoriz_zeroslemma}
f(x) = g(x) \prod_{j=1}^{k} (x - x_j)^{n_j} ~\mbox{, }
\end{equation}
where $g$ is the zero-free part of $f$ on $\mathbb{T}$. Letting,
\begin{equation*}
\beta(f):= \min_{x \in \mathbb{T}} \vert g(x) \vert ~\mbox{,}
\end{equation*}
we therefore conclude,
\begin{equation} \label{eq_radii_2}
\vert f(x) \vert \geq \beta(f) \zeta(f)^{N(f)} ~\mbox{, all } x \in \mathbb{T} \setminus \left( \cup_{j=1}^{k} \overline{B_{\zeta(\mu)}(x_j)} \right) ~\mbox{.}
\end{equation}

Combining (\ref{eq_radii_1}) and (\ref{eq_radii_2}) hence shows that,
\begin{eqnarray}
f(z) \neq 0 ~\mbox{, } z \in \bigcup_{x \in \mathbb{T} \setminus \left( \cup_{j=1}^{k} \overline{B_{\zeta(f)}(x_j)} \right)} \overline{B_r(x)} ~\mbox{, if } \\ \nonumber
r < h \theta_f\left(\beta(f) \zeta(f)^{N(f)}\right) =: \gamma(f) ~\mbox{.}
\end{eqnarray}

In summary, simple geometric considerations to take into account the overlap of circles imply that $R(f)$ can be chosen according to
\begin{equation} \label{eq_appzero_radius}
R(f)= \delta(f)  \cdot \sin\left( \arctan\left( \dfrac{\gamma(f)}{\zeta(f)} \right)\right) = \dfrac{\zeta(f) \gamma(f)}{\sqrt{\zeta(f)^2 + \gamma(f)^2}} ~\mbox{.}
\end{equation}
Finally, because (\ref{eq_factoriz_zeroslemma}) extends to $\mathbb{T}_{R(f)}$, we can choose $\eta= \eta(f,\delta)$ in (\ref{eq_isolatedzeros_estim}) as
\begin{equation} \label{eq_appzero_lowerbound}
\eta(f,\delta) =(\delta/2)^{N(f)} \cdot \min_{z \in \mathbb{T}_\delta} \vert g(z) \vert   ~\mbox{.}
\end{equation}

\subsection*{Applications to Proposition \ref{obs_constantnumberzeros}} 
We apply the results of the previous section to the members of the family $f_\mu: = f - \mu$. As observed earlier, for $\mu \in [-\min_{x \in \mathbb{T}} f(x), \max_{x \in \mathbb{T}} f(x)]=:J(f)$, one has $N_0(f_\mu) >0$. Since $\eta_0$ in Proposition \ref{obs_constantnumberzeros} is quantified as an immediate consequence of (\ref{eq_appzero_lowerbound}), we focus on $R_0$.

Clearly, to quantify $R_0$ it suffices to obtain a uniform lower bound for $R(f_\mu)$ for $\mu \in [\mu_1, \mu_2]$. We mention that for general compact subsets of $J(f)$ the obstacle here is that the modulus of transversality $\tau(f_\mu)$ defined in (\ref{eq_modulusofTransv}) is not lower-semicontinuous in $\mu$; in fact, $\tau(f_\mu)$ may drop to zero if zeros collapse.

By the hypotheses in Proposition \ref{obs_constantnumberzeros} this is however prohibited and
\begin{equation}
\inf_{\mu \in [\mu_1, \mu_2]} \tau(f_\mu) \geq \min_{x \in f^{-1}([\mu_1, \mu_2])} \vert f^\prime(x) \vert =: \tau_0 ~\mbox{.}
\end{equation}

Therefore, if we set 
\begin{equation*}
M_0:= \max_{\mu \in [\mu_1, \mu_2]} \Vert f_\mu \Vert_h ~\mbox{, }
\beta_0:= \min_{\mu \in [\mu_1, \mu_2]} \beta(f_\mu) ~\mbox{,}
\end{equation*}
one has for all $\mu \in [\mu_1, \mu_2]$ that
\begin{eqnarray}
\zeta(f_\mu) \geq \zeta_0:= h \theta_{M_0}(h^{N_0} \tau_0) ~\mbox{,} \\
\gamma(f_\mu) \geq \gamma_0:= h \theta_{M_0}(h^{N_0} \beta_0) ~\mbox{.}
\end{eqnarray}
Thus, using (\ref{eq_appzero_radius}), we can choose $R_0$ as
\begin{equation} \label{eq_example_radius}
R_0 = \dfrac{\zeta_0 \gamma_0}{\sqrt{\zeta_0^2 + \gamma_0^2}} ~\mbox{.}
\end{equation}

\section{Numerical values for the constants in (\ref{eq_LEquantitativemodulus})} \label{app_numerical}

For simplicity we write $\sigma = \sigma(g)$ and $m = m(g)$. Based on the definition of $\sigma$ in (\ref{eq_estimatesolutionFP}), we distinguish between the cases $m \geq \frac{\sqrt{5}+3}{2}$ and $2 < m < \frac{\sqrt{5}+3}{2}$. Set $k = \frac{\sqrt{5}+3}{2}$. We will use the bounds
\begin{eqnarray}
\log(1 + x) \leq &  x & \mbox{, $x > -1$ ,}    \label{eq_logestim1} \\
\log(1 - x ) \geq & \dfrac{- x}{1 - x} & \mbox{, $0 < x  < 1$ .}  \label{eq_logestim2}
\end{eqnarray}

\begin{itemize}
\item{Case 1: $2 < m < k$:} In this case, $\sigma = 1$, whence using the upper bound in (\ref{eq_complexLE_ds_bd_1}) and (\ref{eq_logestim1}),
\begin{eqnarray}
L(\alpha, D) \leq \frac{1}{2} \log \left\{ (1 + \frac{1}{m} )^2 + \frac{1}{m^2} \right\} \leq \frac{1}{2} \log \left(1 + \frac{4}{m^2} \right) \leq \frac{2}{m^2}
\end{eqnarray}

On the other hand, the lower bound in (\ref{eq_complexLE_ds_bd_1}) together with (\ref{eq_logestim2}) implies
\begin{eqnarray}
L(\alpha, D) & \geq & \frac{1}{2} \log \left\{  \dfrac{ ( 1 - \frac{1}{m^2} )^2 + \frac{1}{m^2} }{2}     \right\} \geq -\frac{1}{2} \left[ \dfrac{2(m-1)}{m^2 - 2 (m-1)} + \log{2} \right]  \\
& \geq & - \frac{1}{m^2} \sup_{2 < m \leq k}  \left[ \dfrac{2(m-1)m^2}{2(m^2 - 2 (m-1))} + \frac{\log{2}}{2} m^2 \right] =: - \frac{c}{m^2} ~\mbox{.}
\end{eqnarray}
Since the argument of the supremum increases in $m$, it is attained at $m=k$, which determines $c \approx 5.4407$. In summary, for $2 < m < k$, we conclude
\begin{equation}
- \frac{c}{m^2} \leq L(\alpha, D) \leq \frac{2}{m^2} ~\mbox{.}
\end{equation}

\item{Case 2: $m \geq k$:} First observe that from the definition of $\sigma$, $m > k$ and $m \mapsto \frac{m-1}{m-2}$ decreasing gives
\begin{equation}
\frac{1}{m} \leq \sigma \leq \left( \frac{\sqrt{5} + 1}{\sqrt{5} - 1} \right) \frac{1}{m} =:  \frac{c_+}{m} ~\mbox{.}
\end{equation}

Thus, from (\ref{eq_complexLE_ds_bd_1}) and (\ref{eq_logestim1})-(\ref{eq_logestim2}), we estimate
\begin{eqnarray}
2 L(\alpha, D) & \geq & \log \left[ 1 - \left( \frac{2 c_+ - 1}{m^2} - \frac{1}{m^4} \right) \right] - \log\left( 1+ \frac{c_+^2}{m^2} \right) \nonumber \\
                   & \geq & - \frac{1}{m^2} \left\{ c_+^2 + \dfrac{ (2 c_+ - 1) m^4 - m^2  }{  m^4 - (2 c_+ -  1) m^2 + 1 }   \right\} =: - \frac{1}{m^2} \left\{ c_+^2 + f(m) \right\}
\end{eqnarray}
Since $f(m)$ decreases in $m > 2$, we thus obtain
\begin{equation}
L(\alpha, D) \geq - \frac{d_-}{m^2} ~\mbox{, } d_- := \frac{1}{2} ( c_+^2 + f(k) ) \approx 8.4985
\end{equation}

In view of the upper bound, (\ref{eq_complexLE_ds_bd_1}) and (\ref{eq_logestim1}) yields
\begin{eqnarray}
L(\alpha, D) & \leq & \frac{1}{2} \log \left( 1 + \frac{2 c_+ + 1}{m^2} + \frac{c_+^2}{m^4} \right) \\
             & \leq & \frac{1}{2} \log  \left(1 + \frac{(c_+ + 1)^2}{m^2} \right) \leq \frac{(c_+ + 1)^2}{2 m^2} =: \frac{d_+}{m^2} ~\mbox{,}
\end{eqnarray}
where $d_+ \approx 6.5451$. 
\end{itemize}

In summary, combining the two cases, for all $m > 2$, we obtain the estimate (\ref{eq_LEquantitativemodulus}) with
\begin{eqnarray}
K_2 & := & \max\{ d_- ~, ~ c\} = d_- \approx 8.4985 ~\mbox{, } \\
K_3 & := & \max\{ d_+ ~, ~ 2\} = d_+  \approx 6.5451 ~\mbox{,}
\end{eqnarray}
as claimed in (\ref{eq_quantificConst}).

\section*{Acknowledgement}
R. Han was partially supported by the NSF DMS–1401204.
We would like to thank Svetlana Jitomirskaya for useful discussions.
We also gratefully acknowledge support from the Simons Center for Geometry and Physics, Stony Brook University, where this work was started.

\bibliographystyle{amsplain}

\begin{thebibliography}{10}

\bibitem{Avila_globalthy_published} A. Avila, \textit{Global theory of one-frequency Schr\"odinger operators}, Acta Math. 215, 1-54 (2015).

\bibitem{Bourgain_book_2005} J. Bourgain, \textit{Green's function estimates for lattice Schr單inger operators and applications}, Princeton Univ. Press (2005), Princeton.

\bibitem{Damanik_review2014} D. Damanik, \textit{Schr\"odinger Operators with Dynamically Defined Potentials: A Survey}, Ergodic Theory and Dynamical Systems, to appear.

\bibitem{KleinDuarte_2013_posLE} P. Duarte and S. Klein, \textit{Positive Lyapunov exponents for higher dimensional quasiperiodic cocycles}, Commun. Math. Phys. {\bf 332} 189-219 (2014)

\bibitem{DuarteKlein_2015_contising} P. Duarte and S. Klein, \textit{Continuity, positivity, and simplicity of the Lyapunov exponents for quasi-periodic cocycles}, preprint (2015); available on arXiv:1603.06851 [math.DS]

\bibitem{DuarteKlein_monograph} P. Duarte and S. Klein, \textit{Lyapunov exponents of linear cocycles}, Atlantis Press (2016), Amsterdam.


\bibitem{GeYou}L. Ge, J. You, \textit{Applications of Quantitative Almost Reducibility}, a talk during the workshop ``Between Dynamics and Spectral Theory'' June 6-10, 2016 at Simons Center for Geometry and Physics, Stony Brook University.

\bibitem{Herman_1983} M. Herman, \textit{Une methode pour minorer les exposants des Lyapunov et quelques examples montrant le charact\`ere local d'un th\'eor\`eme d'Arnold et de Moser sur le tore de dimension 2}, Comment. Math. Helv. 58 (1983), 453 - 562.


\bibitem {JitomirskayaMarx_JFPT_2011} S. Jitomirskaya, C.A. Marx, \textit{Continuity of the Lyapunov Exponent for analytic quasi-periodic cocycles with singularities}, J. Fixed Point Theory Appl. 10 (2011), 129-146. 

\bibitem {JitomirskayaMarx_ETDS_2016_review} S. Jitomirskaya, C.A. Marx, \textit{Dynamics and spectral theory of quasi-periodic Schr\"odinger-type operators}, Ergodic Theory and Dynamical Systems, to appear.

\bibitem{Marx_Shou_Wellens_2015} C. A. Marx, L. Shou, and J. Wellens, \textit{Subcritical behavior for quasi-periodic Schr\"odinger cocycles with trigonometric potentials}, to appear in the Journal of Spectral Theory (2015). Preprint vailable on arXiv:1509.05279v2 [math-ph].

\bibitem{Ruelle_1979_LEanalytic} D. Ruelle, \textit{Analyticity Properties of the Characteristic Exponents of Random Matrix Products}, Advances in Mathematics 32 (1979), 68-80.



\bibitem{SoretsSpencer_1991} E. Sorets, T. Spencer, \textit{Positive Lyapunov Exponents for Schr\"odinger Operators with quasi-periodic potentials}, CMP 142 (3), 543 - 566 (1991).


\bibitem{Zhang_2012} Z. Zhang, \textit{Positive lyapunov exponents for quasiperiodic Szeg\H{o} cocycles}, Nonlinearity 25 (2012), 1771 - 1797.



\end{thebibliography}

\end{document}